\documentclass[letterpaper,10pt, conference]{_aux/ieeeconf}
\usepackage{times}
\usepackage{setspace}
\usepackage{url}
\spacing{1}
\usepackage[utf8]{inputenc}
\usepackage[T1]{fontenc}
\usepackage{graphicx}		
\usepackage{wrapfig}
\usepackage{sidecap} 
\usepackage[export]{adjustbox}
\usepackage{subcaption}
\usepackage[format=plain,font=small,labelfont=bf,labelsep=period]{caption}
\usepackage{float}

\usepackage{hyperref}

\usepackage{amsmath} 
\usepackage{amssymb}  
\usepackage{amsthm}
\usepackage{mathtools}
\usepackage[normalem]{ulem}
\usepackage{paralist}	
\usepackage[space]{grffile} 
\usepackage{color}

\usepackage{enumitem}
\usepackage{bm}
\usepackage{cancel}
\usepackage{hhline}
\usepackage[c2 , nocomma]{optidef}
\usepackage{oubraces}
\usepackage{algorithmic}
\usepackage{graphicx}
\usepackage{textcomp}

\usepackage{amsfonts}
\usepackage{lipsum}
\usepackage{cleveref} 
\usepackage{cite}

\usepackage{comment}

\newtheorem{theorem}{Theorem}
\newtheorem{corollary}{Corollary}
\newtheorem{lemma}{Lemma}
\theoremstyle{definition}
\newtheorem{definition}{Definition}
\theoremstyle{remark}
\theoremstyle{definition}
\theoremstyle{definition}
\newtheorem{proposition}{Proposition}
\newtheorem{example}{Example}

\newcommand{\R}{\mathbb{R}}

\newcommand{\x}{\boldsymbol{x}}


\newcommand{\sspace}{\mkern9mu}
\newcommand{\nspace}[1]{\mkern#1mu}


\newcommand{\norm}[1]{\left\Vert #1 \right\Vert}


\newcommand{\Cs}{\mathcal{C}_{\rm S}} 
\newcommand{\Cb}{\mathcal{C}_{\rm B}} 
\newcommand{\Cbi}{\mathcal{C}_{\rm I}} 


\newcommand{\Tt}{T} 

\newcommand{\ub}{\boldsymbol{k}_{\rm b}} 
\newcommand{\um}{\boldsymbol{k}_{\rm m}} 
\newcommand{\udes}{\boldsymbol{k}_{\rm p}} 

\newcommand{\phinb}[2]{\boldsymbol{\phi}_{\rm b} (#1, #2)}

\newcommand{\phinom}{\phinb{\tau}{\boldsymbol{x}}}
\newcommand{\phinomT}{\phinb{\Tt}{\boldsymbol{x}}}

\newcommand{\stmnom}{\boldsymbol{\Phi}_{\rm b}(\tau,\boldsymbol{x})}
\newcommand{\stmnomT}{\boldsymbol{\Phi}_{\rm b}(\Tt,\boldsymbol{x})}
\newcommand{\jac}{F_{\rm cl}}


\definecolor{darkblue}{RGB}{0,0,102}
\definecolor{lightblue}{RGB}{77,77,148}

\definecolor{gold}{RGB}{234, 170, 0}
\definecolor{metallic_gold}{RGB}{139, 111, 78}

\DeclareMathOperator{\diag}{diag}

\usepackage{xfrac}

\usepackage{dblfloatfix}

\usepackage{pdfpages}
\usepackage{graphicx}

\usepackage{enumitem}












\newcommand{\bx}{\mathbf{x}}






\def\BibTeX{{\rm B\kern-.05em{\sc i\kern-.025em b}\kern-.08em
T\kern-.1667em\lower.7ex\hbox{E}\kern-.125emX}}

\IEEEoverridecommandlockouts

\begin{document}

\title{\LARGE \vspace{-.65cm}
\textbf{Safety-Critical Control with Bounded Inputs: \\ A Closed-Form Solution for Backup Control Barrier Functions}
}

\author{David E. J. van Wijk$^{1}$,
Ersin Da\c{s}$^{1}$, Tamas G. Molnar$^{2}$, Aaron D. Ames$^{1}$, and Joel W. Burdick$^{1}$
\thanks{*This work was supported by DARPA under the LINC Program and the AFOSR Test and Evaluation program, grant FA9550-22-1-0333.}
\thanks{
$^{1}$Mechanical and Civil Engineering, California Institute of Technology, Pasadena, CA 91125, USA, \texttt{\{vanwijk, ersindas, ames, jburdick\}@caltech.edu}.}
\thanks{$^{2}$Mechanical Engineering, Wichita State University, Wichita, KS 67260, USA, \texttt{tamas.molnar@wichita.edu}.}
}

\maketitle

\begin{abstract}
Verifying the safety of controllers is critical for many applications, but is especially challenging for systems with bounded inputs. Backup control barrier functions (bCBFs) offer a structured approach to synthesizing safe controllers that are guaranteed to satisfy input bounds by leveraging the knowledge of a backup controller. While powerful, bCBFs require solving a high-dimensional quadratic program at run-time, which may be too costly for computationally-constrained systems such as aerospace vehicles. We propose an approach that optimally interpolates between a nominal controller and the backup controller, and we derive the solution to this optimization problem in closed form. We prove that this closed-form controller is guaranteed to be safe while obeying input bounds. We demonstrate the effectiveness of the approach on a double integrator and a nonlinear fixed-wing aircraft example.
\end{abstract}
\section{Introduction}
Controlling dynamic systems
with constraints that arise from
physical limitations, environmental interactions, or user-defined requirements is a fundamental aspect of control applications.
As such, constrained control \cite{mayne2000}---also called {\em safety-critical control} when constraints encode safety requirements \cite{hsu2023safety}---has become a top priority in many autonomous or semi-autonomous cyber-physical systems ranging from robotics to aeronautics.
{\em Control barrier functions (CBFs)} \cite{ames_2017} have emerged as a framework for synthesizing safe controllers
that guarantee the forward invariance of a specified safe set.
While CBFs have demonstrated success in ensuring safety in many domains \cite{wabersich2023data,garg2024advances}, they also highlight certain challenges.
Namely, conventional CBFs require one to verify that the safe control inputs are feasible at all points within the safe set, which may be difficult in the presence of input constraints.

\begin{figure}[t]
    \centering 
    \vskip  2.2mm
    \includegraphics[width=1\linewidth]{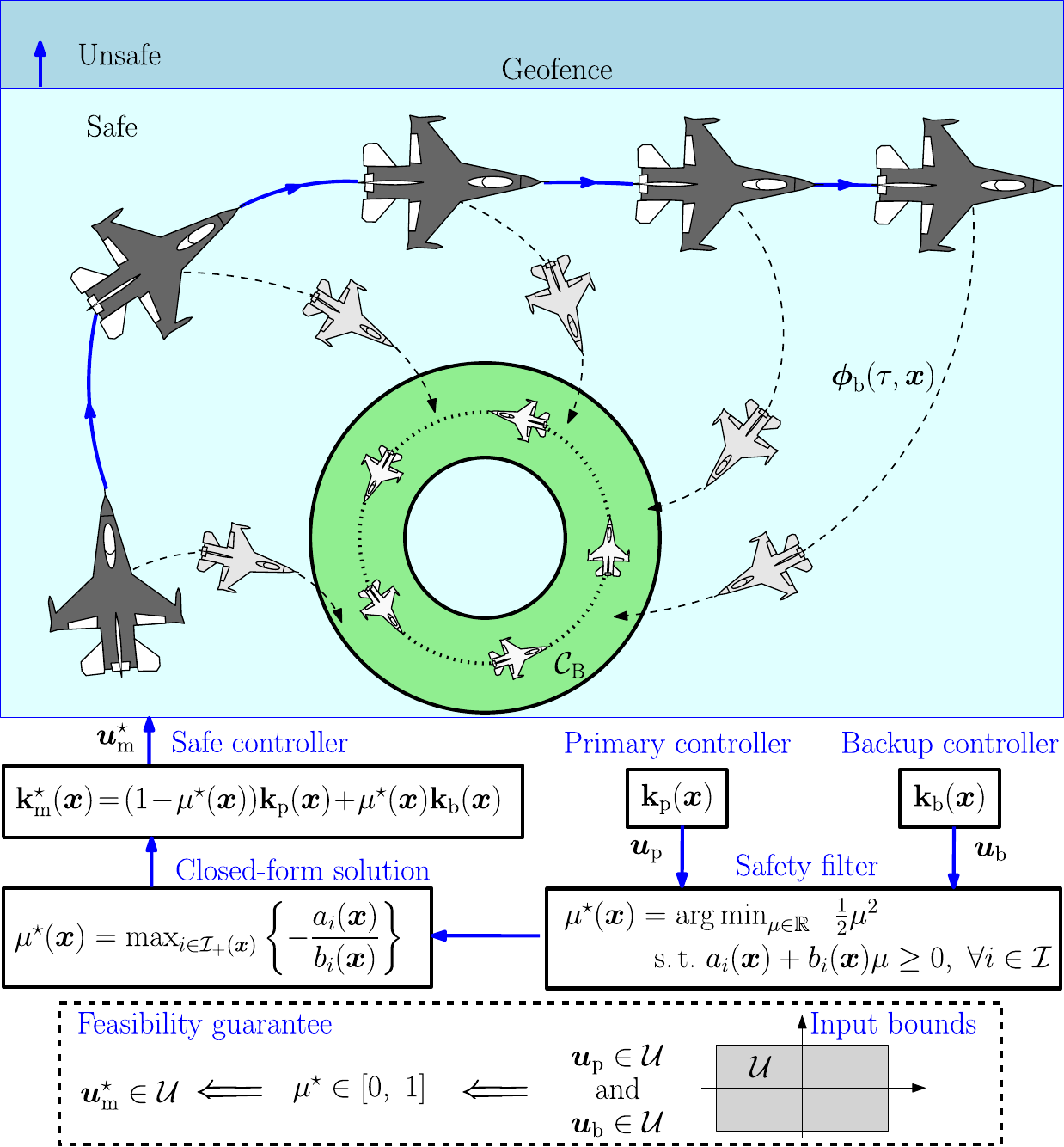}
    \caption{Overview of the proposed safety-critical control synthesis framework, illustrated on a fixed-wing aircraft geofencing scenario. Our \textit{optimally interpolated (OI)} controller guarantees the safety of nonlinear affine systems with input bounds and is obtained in closed-form.}
    \label{fig: env}
\end{figure}

Motivated by this feasibility challenge, {\em backup control barrier functions (bCBFs)} \cite{gurriet_online_2018} have been proposed to provide theoretical guarantees for the safety of control systems with input bounds. The bCBF technique is applicable for complex nonlinear systems, and has shown success across a wide range of applications \cite{dunlap2022comparing,vanWijk_DRbCBF_24,hobbs2023rta,gurriet_thesis,ko2024backup,zeng2025onboard,janwani2024learning,rabiee2025soft,rivera2024forward}.
However, a potential downside of this approach is its online computational requirements: it involves solving a quadratic program (QP) with a large number of constraints at run-time, and to construct these constraints, one must
forward integrate a number of ordinary differential equations (ODEs) whose number increases quadratically with the state dimension. This may not be feasible for high-dimensional nonlinear systems with limited online computational resources, such as aerospace vehicles. Thus, many safety-critical control solutions for aerospace vehicles have utilized CBFs rather than bCBFs (see e.g., \cite{altunkaya2025nuisancefree,squires2022composition,zheng2023constrained,breeden2020,molnar2025fixedwing}).  

To remedy this, the authors of \cite{singletary2022_on,singletary_ICRA2020} have introduced a technique for safe controller synthesis that builds on the concept of bCBFs without requiring optimization for the safe input.
Instead, this method -- which we call {\em safe blending} -- utilizes a smooth blending function to safely
combine a nominal controller and a backup controller.
This approach does not require expensive gradient computations, nor does it involve solving a QP online.
While effective, this solution is inherently sub-optimal, as the blending function is chosen subjectively and must be hand-tuned. Furthermore, this approach may exhibit undesirable oscillations
when the nominal and backup controllers act against each other,
resulting in controller chattering and degraded system performance. 

Inspired by these previous works, we propose a middle ground between the bCBF and the blended approach. Our main contribution is the \textit{optimally interpolated (OI)} controller---a novel safety-critical control technique for systems with bounded inputs (see \Cref{fig: env}). The proposed method optimally interpolates between a nominal controller and a pre-verified backup controller to ensure that the dynamical system remains within a controlled invariant safe set for all time.
We derive a closed-form expression for the optimal safe controller, and prove that this solution is feasible even in the presence of input bounds.
We demonstrate that the proposed OI controller does not experience the undesirable oscillations observed for the blending approach.
Furthermore, compared to the bCBF approach, the number of ODEs to be integrated is smaller, while the closed-form expression eliminates the need for solving a high-dimensional QP numerically. 

The rest of the paper is organized as follows. Section~\ref{sec:pre} overviews CBFs and bCBFs. Section~\ref{sec:OG_blending} discusses the function-based controller blending method.
In Section~\ref{sec:theory}, we present the proposed
optimally interpolated safety-critical control approach.
In Section~\ref{sec:sim}, we cover the results of our numerical simulations, and we conclude with Section~\ref{sec:conc}.
\section{Preliminaries}
\label{sec:pre}
\subsection{Control Barrier Functions}
Consider a nonlinear control affine system of the form
\begin{align} \label{eq:affine-dynamics}
    \dot{\x} = f(\x) + g(\x)\boldsymbol{u}, \nspace{8}
    \x \in \mathcal{X} \subseteq \R^n, \nspace{8}
    \boldsymbol{u} \in \mathcal{U} \subseteq \R^m,
\end{align}
where ${f:\mathcal{X} \to \R^n}$ and ${g:\mathcal{X} \to \R^{n \times m}}$ are smooth functions. We assume that $\mathcal{U}$ is an $m$-dimensional convex polytope.
For an initial condition ${\x(0) = \x_0 \in \mathcal{X}}$, if $\boldsymbol{u}$ is given by a locally Lipschitz feedback controller ${\boldsymbol{k}:\mathcal{X} \to \mathcal{U}}$, ${\boldsymbol{u}=\boldsymbol{k}(\x)}$, the closed-loop system has a unique solution.

Safety is defined by membership to a set $\Cs$, and safe controllers render this safe set forward invariant.
\begin{definition}
A set ${\mathcal{C} \subseteq \R^n}$ is \textit{forward invariant} along the closed-loop system if ${\x(0) \in \mathcal{C} \implies \x(t) \in \mathcal{C},}$ for all ${t > 0}$.
\end{definition}
Consider the safe set ${\Cs \triangleq \{\x \in \mathcal{X} : h(\x) \ge 0\}}$ as the 0-superlevel set of a continuously differentiable function ${h : \mathcal{X} \to \R}$,
where the gradient of $h$ along the boundary of $\Cs$ is nonzero.
The function $h$ is a CBF \cite{ames_2017} for \eqref{eq:affine-dynamics} on $\Cs$ if there exists a class-$\mathcal{K}_{\infty}$ function\footnote{A function ${\alpha : \R_{\ge 0} \to \R_{\ge 0}}$ is of class-$\mathcal{K}_{\infty}$ if it is continuous, strictly increasing, ${\alpha(0)=0}$, and $\text{lim}_{r \to \infty} \nspace{2}\alpha(r) = \infty$.} $\alpha$ such that for all $\x \in \Cs$
\begin{equation*}
    \sup_{\boldsymbol{u} \in \mathcal{U}} \big [ \dot{h}(\x,\boldsymbol{u}) \triangleq
    \nabla h(\x) \cdot \big( f(\x) + g(\x) \boldsymbol{u} \big) \big ]
    >  -\alpha(h(\x)).
\end{equation*}
\begin{theorem}[\hspace{-0.01em}\cite{ames_2017}] \label{thm: cbf}
If $h$ is a CBF for \eqref{eq:affine-dynamics} on $\Cs$, then any locally Lipschitz controller $\boldsymbol{k}:\mathcal{X} \to \mathcal{U}$, $\boldsymbol{u}=\boldsymbol{k}(\x)$ satisfying 
\begin{align} \label{eq: cbf_condition}
    \dot{h}(\x, \boldsymbol{u}) \ge -\alpha(h(\x)),
\end{align}
for all $\x \in \Cs$ renders the set $\Cs$ forward invariant.
\end{theorem}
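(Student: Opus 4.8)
The plan is to reduce the forward-invariance claim to a scalar differential inequality along closed-loop trajectories and then close the argument with a comparison principle. Since $h$ is a CBF and the controller $\boldsymbol{k}$ is locally Lipschitz, the closed-loop vector field $f(\x) + g(\x)\boldsymbol{k}(\x)$ is locally Lipschitz, so for each $\x_0 \in \Cs$ there is a unique maximal solution $\x(t)$. Differentiating $h$ along this solution by the chain rule gives $\tfrac{d}{dt} h(\x(t)) = \nabla h(\x(t)) \cdot \big(f(\x(t)) + g(\x(t))\boldsymbol{k}(\x(t))\big) = \dot{h}(\x(t), \boldsymbol{k}(\x(t)))$, which by hypothesis \eqref{eq: cbf_condition} is bounded below by $-\alpha(h(\x(t)))$ whenever $\x(t) \in \Cs$. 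Writing $\eta(t) \triangleq h(\x(t))$, this is precisely the scalar differential inequality $\dot{\eta}(t) \ge -\alpha(\eta(t))$.

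Next I would introduce the scalar comparison system $\dot{\phi} = -\alpha(\phi)$ with $\phi(0) = \eta(0) = h(\x_0) \ge 0$. Because $\alpha$ is class-$\classKinfty$, we have $\alpha(0) = 0$ and $\alpha(r) > 0$ for $r > 0$; hence $\phi \equiv 0$ is an equilibrium, $\dot{\phi} \le 0$ whenever $\phi \ge 0$, and any solution launched from $\phi(0) \ge 0$ is non-increasing and remains nonnegative for all $t \ge 0$. Applying the Comparison Lemma to $\dot{\eta} \ge -\alpha(\eta)$ then yields $h(\x(t)) = \eta(t) \ge \phi(t) \ge 0$, i.e.\ $\x(t) \in \Cs$, for every $t$ in the interval on which the differential inequality is valid.

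The main obstacle is that the inequality $\dot{\eta} \ge -\alpha(\eta)$ is only guaranteed while the trajectory actually lies in $\Cs$, so the comparison step cannot be applied blindly on an a priori fixed interval; one must rule out the trajectory escaping before the bound can take effect. I would handle this with a first-exit-time argument: let $\tau \triangleq \inf\{t > 0 : h(\x(t)) < 0\}$ and suppose, for contradiction, that $\tau < \infty$. On $[0,\tau)$ the trajectory stays in $\Cs$, so the differential inequality and the comparison conclusion hold up to $\tau$, forcing $h(\x(\tau)) = 0$ by continuity. Evaluating \eqref{eq: cbf_condition} at this boundary point gives $\dot{h}(\x(\tau),\boldsymbol{k}(\x(\tau))) \ge -\alpha(0) = 0$, so $h$ is non-decreasing at $\tau$, contradicting that $h$ becomes negative immediately after $\tau$. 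Hence $\tau = \infty$ and $\Cs$ is forward invariant. A secondary technical point is that $\alpha$ need not be Lipschitz at the origin, so uniqueness of the comparison system at its equilibrium is unavailable; I would avoid relying on it, since nonnegativity of $\phi$ follows solely from the sign of $\dot{\phi}$ for $\phi \ge 0$ rather than from any uniqueness property.
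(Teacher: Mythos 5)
The paper itself offers no proof of Theorem~\ref{thm: cbf} --- it is imported wholesale from \cite{ames_2017} --- so there is nothing in-paper to compare against; I am evaluating your argument on its own terms. Your overall strategy (chain rule along the closed-loop solution, scalar comparison with $\dot{\phi}=-\alpha(\phi)$, and the explicit caveat that $\alpha$ need not be Lipschitz at the origin) is the standard route, and everything up to the last paragraph is sound.

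The gap is in the closing contradiction. From $h(\x(\tau))=0$ you infer $\dot{h}(\x(\tau),\boldsymbol{k}(\x(\tau)))\ge-\alpha(0)=0$ and conclude that ``$h$ is non-decreasing at $\tau$, contradicting that $h$ becomes negative immediately after $\tau$.'' A nonnegative derivative at the single instant $\tau$ does not prevent $h(\x(t))$ from turning negative for $t>\tau$: the function $\eta(t)=-(t-\tau)^2$ satisfies $\eta(\tau)=0$ and $\dot{\eta}(\tau)=0$ yet is negative for every $t>\tau$. This is precisely the classical pitfall in barrier-type invariance arguments, and it is the step the theorem's hypotheses exist to circumvent. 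Because \eqref{eq: cbf_condition} is assumed only on the closed set $\Cs$, once the trajectory sits on $\partial\Cs$ you have no differential inequality available outside, and a pointwise derivative bound at the boundary is not enough to force re-entry. Two standard repairs: (i) invoke Nagumo's theorem --- the condition $\dot{h}\ge-\alpha(0)=0$ on $\partial\Cs$, together with the paper's standing assumption that $\nabla h$ is nonzero on $\partial\Cs$ and the local Lipschitz continuity (hence uniqueness of solutions) of the closed-loop field, is exactly the sub-tangentiality condition that yields forward invariance of the closed set; or (ii) follow \cite{ames_2017}, where the CBF inequality is imposed on an open set $\mathcal{D}\supseteq\Cs$, so the comparison lemma applies on the entire maximal interval of existence with no first-exit argument at all, giving $h(\x(t))\ge\phi(t)\ge0$ directly. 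Either repair replaces your final paragraph; the rest of your argument stands.
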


For an arbitrary \textit{primary} or \textit{nominal controller}, ${\boldsymbol{k}_{\rm p} : \mathcal{X} \to \mathcal{U}}$, one can ensure the safety of \eqref{eq:affine-dynamics} by solving a quadratic program (QP) for the safe control, ${\boldsymbol{k}_{\rm safe}:\mathcal{X}\to\mathcal{U}}$:
\begin{align} 
    \boldsymbol{k}_{\rm safe}(\x) = \underset{\boldsymbol{u} \in \mathcal{U}}{\text{argmin}} \mkern9mu &
    \left\Vert \boldsymbol{u} - \boldsymbol{k}_{\rm p}(\x) \right\Vert^{2}
    \tag{CBF-QP} \label{eq:cbf-qp} \\
    \text{s.t.} \quad &
    \dot{h}(\x, \boldsymbol{u}) \ge -\alpha(h(\x)).
    \nonumber
\end{align}
If $h$ is a CBF, then the~\eqref{eq:cbf-qp} is feasible, and when ${\mathcal{U} = \R^m}$, its closed-form solution becomes~\cite{cohen2023smooth}
\begin{align} \label{eq:safetyfilter}
    \boldsymbol{k}_{\rm safe}(\x) & = \udes(\x) + \hat{\mu}(\x) \big( \nabla h(\x) \cdot g(\x) \big)^\top, \\
    \hat{\mu}(\x) & = \lambda \big( \hat{a}(\x), \hat{b}(\x) \big), \nonumber
\end{align}
where $\hat{a}$ and $\hat{b}$ stem from the coefficients in the constraint:
\begin{align*}
    \hat{a}(\x) & = \dot{h} \big( \x, \udes(\x) \big) + \alpha \big( h(\x) \big), \\
    \hat{b}(\x) & = \| \nabla h(\x) \cdot g(\x) \|^2,
\end{align*}
whereas $\lambda$
is defined using ${{\rm ReLU}(\cdot) \triangleq \max\{0,\cdot\}}$ as
\begin{equation} \label{eq:lambda_qp}
    \lambda(a,b) = 
    \begin{cases}
        0 & {\rm if}\ b \leq 0, \\
        {\rm ReLU}(-a/b) & {\rm if}\ b > 0.
    \end{cases}
\end{equation}

Ensuring the feasibility of the \eqref{eq:cbf-qp} can be challenging, especially for high dimensional systems. This motivates the use of an extension of CBFs known as backup CBFs.

\subsection{Backup Control Barrier Functions} \label{sec:bCBF}

Backup CBFs (bCBFs) \cite{gurriet_online_2018,gurriet_scalable_2020} can overcome the feasibility issue of traditional CBFs by constructing controlled invariant sets at run-time for systems with input constraints. 
\begin{definition}
A set $\mathcal{C} \subseteq \R^n$ is \textit{controlled invariant} if there exists a controller ${\boldsymbol{k}:\mathcal{X} \to \mathcal{U}}$, ${\boldsymbol{u}=\boldsymbol{k}(\x)}$ rendering $\mathcal{C}$ forward invariant for \eqref{eq:affine-dynamics}, with ${\boldsymbol{u} \in \mathcal{U}}$. 
\end{definition}

First, one must obtain a controlled invariant subset of $\Cs$, known as the \textit{backup set}:
\begin{align}
    \Cb \triangleq \{\x \in \mathcal{X} : h_{\rm b}(\x) \ge 0\} \subseteq \Cs,
\end{align}
where ${h_{\rm b} : \mathcal{X} \to \R}$ is continuously differentiable with a nonzero gradient along the boundary of $\Cb$.
Second, one must define a smooth \textit{backup control law} $\ub : \mathcal{X} \to \mathcal{U}$ which renders the backup set forward invariant along the
system
\begin{align}\label{eq: f_cl}
    \dot{\x} = f_{\rm cl}(\x) \triangleq f(\x) + g(\x)\ub(\x).
\end{align}

Finding a small controlled invariant subset of $\Cs$ is typically much easier than verifying that $\Cs$ itself is controlled invariant. For example, a backup set can often be defined by a level set of a quadratic Lyapunov function centered on a stabilizable equilibrium point for the linearized dynamics, and can be rendered forward invariant by a simple feedback controller \cite{gurriet_scalable_2020}.
One could directly constrain the system \eqref{eq:affine-dynamics} to operate within $\Cb$ to guarantee safety. However, as $\Cb$ may be a very small subset of $\Cs$, this would lead to conservative behavior.
Instead, it is possible to expand this set implicitly, thereby obtaining a new, larger controlled invariant set.

To construct a controlled invariant set online, we allow the system to evolve beyond $\Cb$ by forward integrating the backup dynamics~\eqref{eq: f_cl} over a finite horizon. If the system can stay in $\Cs$ and safely reach $\Cb$ from the current state using $\ub$, this state is classified as safe. More precisely, the controlled invariant set, ${\Cbi \!\subseteq\! \Cs}$, is defined as
\begin{align} \label{def:C_BI}
    \Cbi \triangleq \left\{ \x \in \mathcal{X} \,\middle|\, 
    \begin{array}{c}
    h(\phinom) \geq 0, \forall \nspace{1} \tau \in [0,T], \\
    h_{\rm b}(\phinomT) \geq 0 \\
    \end{array}
    \right\},
\end{align}
where $\phinom$ is the \textit{flow} of the backup system~\eqref{eq: f_cl} over the interval $\tau\in[0,T]$ for a horizon $T > 0$ starting at state $\x$:
\begin{align} \label{eq: nomFlow}
    \frac{\partial}{\partial \tau}{\boldsymbol{\phi}_{\rm b}}(\tau,\x) = f_{\rm cl}(\phinom), \sspace \phinb{0}{\x} = \x.
\end{align}
This definition gives recursive feasibility guarantees under input constraints.
\begin{lemma}[\!\!\cite{gurriet_scalable_2020}\!{\cite[Lem.~1]{tamasACC_ROM_bCBF}}]  \label{lemma: CBI_controlledInv}
The set $\Cbi$ is controlled invariant, and the backup controller $\ub$ renders $\Cbi$ forward invariant along \eqref{eq: f_cl} as
${\x \in \Cbi \implies \phinom \in \Cbi \subseteq \Cs, \forall \tau \geq 0}$.
\end{lemma}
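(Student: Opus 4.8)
The plan is to exploit the semigroup (flow-composition) property of the backup flow, $\boldsymbol{\phi}_{\rm b}(\tau_1 + \tau_2, \x) = \boldsymbol{\phi}_{\rm b}(\tau_1, \boldsymbol{\phi}_{\rm b}(\tau_2, \x))$, which holds because $f_{\rm cl}$ in \eqref{eq: f_cl} is smooth and hence the solution of \eqref{eq: nomFlow} is unique. Since controlled invariance only asks for the existence of one admissible controller rendering the set forward invariant, and $\ub$ takes values in $\mathcal{U}$ by construction, it suffices to show that $\ub$ renders $\Cbi$ forward invariant, i.e. $\x \in \Cbi \implies \boldsymbol{\phi}_{\rm b}(s, \x) \in \Cbi$ for all $s \geq 0$. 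Two auxiliary facts set up the argument. First, evaluating the defining conditions of $\Cbi$ at $\tau = 0$ gives $h(\x) \geq 0$, so $\Cbi \subseteq \Cs$. Second, because $\Cb$ is forward invariant under $\ub$ and $\Cb \subseteq \Cs$, any $\x \in \Cb$ satisfies $h(\boldsymbol{\phi}_{\rm b}(\tau,\x)) \geq 0$ for all $\tau \geq 0$ and $h_{\rm b}(\boldsymbol{\phi}_{\rm b}(T,\x)) \geq 0$, whence $\Cb \subseteq \Cbi$.

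Next I fix $\x \in \Cbi$ and $s \geq 0$, and verify the two membership conditions for $\boldsymbol{\phi}_{\rm b}(s, \x)$. Using composition, $\boldsymbol{\phi}_{\rm b}(\tau, \boldsymbol{\phi}_{\rm b}(s, \x)) = \boldsymbol{\phi}_{\rm b}(\tau + s, \x)$, so what is needed is $h(\boldsymbol{\phi}_{\rm b}(\tau + s, \x)) \geq 0$ for all $\tau \in [0,T]$ together with $h_{\rm b}(\boldsymbol{\phi}_{\rm b}(T + s, \x)) \geq 0$. The key observation, and the only subtle point, is that $\tau + s$ may exceed the integration horizon $T$ on which the definition of $\Cbi$ provides direct information, so the argument splits at the instant the shifted time crosses $T$. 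For the portion with $\tau + s \leq T$, the condition $h \geq 0$ is inherited verbatim from $\x \in \Cbi$. For the portion with $\tau + s > T$, I rewrite $\boldsymbol{\phi}_{\rm b}(\tau + s, \x) = \boldsymbol{\phi}_{\rm b}(\tau + s - T, \boldsymbol{\phi}_{\rm b}(T, \x))$; since $\x \in \Cbi$ forces $\boldsymbol{\phi}_{\rm b}(T, \x) \in \Cb$ and $\Cb$ is forward invariant, this point lies in $\Cb \subseteq \Cs$, giving $h \geq 0$. The terminal condition is handled identically: $\boldsymbol{\phi}_{\rm b}(T + s, \x) = \boldsymbol{\phi}_{\rm b}(s, \boldsymbol{\phi}_{\rm b}(T, \x)) \in \Cb$ by forward invariance of $\Cb$, so $h_{\rm b} \geq 0$. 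This establishes $\boldsymbol{\phi}_{\rm b}(s, \x) \in \Cbi$.

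The main obstacle is precisely this horizon-overshoot bookkeeping: the definition of $\Cbi$ only certifies safety along the backup flow up to time $T$, so the forward-invariance proof cannot proceed by a naive ``shift the trajectory'' argument alone. The terminal constraint $h_{\rm b}(\boldsymbol{\phi}_{\rm b}(T,\x)) \geq 0$ is exactly what bridges this gap, funneling every trajectory into the pre-verified invariant set $\Cb$ before the horizon runs out, after which forward invariance of $\Cb$ takes over and guarantees safety for all remaining time. Implicit throughout is forward completeness of the backup flow from points of $\Cbi$: the flow exists on $[0,T]$ by the very definition of $\Cbi$, and beyond $T$ it exists for all future time because it has entered the (typically compact) forward-invariant set $\Cb$. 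Finally, since $\boldsymbol{\phi}_{\rm b}(s,\x) \in \Cbi \subseteq \Cs$ for every $s \geq 0$, the chain $\phinom \in \Cbi \subseteq \Cs$ asserted in the statement follows, which closes the argument.
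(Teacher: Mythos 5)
Your proof is correct and follows the standard argument for this result; note that the paper itself states Lemma~\ref{lemma: CBI_controlledInv} without proof, deferring to the cited references, and your semigroup-based case split at the horizon $T$ (using the terminal condition $h_{\rm b}(\phinomT)\geq 0$ to hand the trajectory off to the forward-invariant set $\Cb$) is precisely the argument given there. The only point worth making explicit is that concluding $\phinb{T}{\x}\in\Cb$ from $h_{\rm b}(\phinomT)\geq 0$ tacitly assumes the flow remains in $\mathcal{X}$, a standing assumption throughout.
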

As a direct consequence of \Cref{lemma: CBI_controlledInv}, the backup control law can be shown to satisfy safety constraints similar to \eqref{eq: cbf_condition}.
\begin{lemma}[\!\!\cite{gurriet_scalable_2020}\!{\cite[Lem.~2]{tamasACC_ROM_bCBF}}] \label{lemma: backup_constraint}
    There exist class-$\mathcal{K}_{\infty}$ functions $\alpha$, $\alpha_{\rm b}$ such that
    \begin{subequations} \label{eq:backup_constraint}
    \begin{align} 
        \dot{h} \big( \phinom,\ub(\x) \big)
        &\ge - \alpha \big(h(\phinom)\big),  \\ 
        \!\! \dot{h}_{\rm b} \big( \phinomT, \ub(\x) \big) 
        &\ge - \alpha_{\rm b} \big(h_{\rm b}(\phinomT)\big), 
    \end{align}
    \end{subequations}
    for all ${\tau \in [0,T]}$ and ${\x\in\mathcal{\Cbi}}$.
\end{lemma}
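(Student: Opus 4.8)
The plan is to recognize both inequalities in \eqref{eq:backup_constraint} as instances of a single \emph{converse} statement: whenever the smooth backup controller $\ub$ renders the zero-superlevel set of a continuously differentiable function forward invariant along $f_{\rm cl}$, there exists a class-$\classKinfty$ function bounding that function's decrease rate from below. I would apply this once with $h$ on $\Cbi$ to obtain $\alpha$, and once with $h_{\rm b}$ on $\Cb$ to obtain $\alpha_{\rm b}$. Both hypotheses are already in hand: \Cref{lemma: CBI_controlledInv} guarantees $\phinom \in \Cbi \subseteq \Cs$ for all $\tau \geq 0$, so the backup flow never leaves $\Cs$, and $\ub$ renders $\Cb$ forward invariant along \eqref{eq: f_cl} by construction. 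Throughout I would assume $\Cbi$ (equivalently, the relevant operating region of $\Cs$) is compact, the standard regularity underlying such converse results.

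First I would extract the Nagumo-type boundary conditions. Since the backup flow leaves $\Cs$ nowhere, any trajectory of $f_{\rm cl}$ started in $\Cbi$ stays in $\Cs = \{h \ge 0\}$; evaluating the time derivative on the boundary $\{h = 0\}$ forces $\nabla h(\x) \cdot f_{\rm cl}(\x) \ge 0$ there, since $\nabla h \cdot f_{\rm cl} < 0$ at a boundary point (with $\nabla h \ne 0$ on $\partial \Cs$) would push the trajectory into $\{h < 0\}$ and contradict invariance. The identical argument applied to $\Cb$ gives $\nabla h_{\rm b}(\x)\cdot f_{\rm cl}(\x) \ge 0$ on $\{h_{\rm b} = 0\}$. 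These sign conditions at the zero level are what let the eventual $\classKinfty$ functions vanish at the origin.

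Next I would construct $\alpha$ from the level sets of $h$. Define, for $r \ge 0$, the worst-case decrease rate $\psi(r) = \max\{0,\, -\inf\{ \nabla h(\x)\cdot f_{\rm cl}(\x) : \x \in \Cbi,\ h(\x) = r \}\}$; compactness of $\Cbi$ and continuity of $\x \mapsto \nabla h(\x)\cdot f_{\rm cl}(\x)$ make the inner infimum finite and, by the maximum theorem (using $\nabla h \ne 0$, so the level sets vary continuously), make $\psi$ continuous, with $\psi(0) = 0$ by the previous step. By construction $\nabla h(\x)\cdot f_{\rm cl}(\x) \ge -\psi(h(\x))$ for every $\x \in \Cbi$. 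It then remains to dominate the continuous, non-decreasing, but possibly not strictly increasing envelope $\bar\psi(r) = \sup_{0 \le s \le r}\psi(s)$ by a genuine class-$\classKinfty$ function: I would set $\alpha(r) = \bar\psi(r) + r$, which is continuous, strictly increasing, satisfies $\alpha(0)=0$, and is unbounded, hence $\alpha \in \classKinfty$, while preserving $\nabla h \cdot f_{\rm cl} \ge -\alpha(h)$ on $\Cbi$. Repeating verbatim with $h_{\rm b}$ on $\Cb$ yields $\alpha_{\rm b}$. Finally, since $\phinom \in \Cbi$ for all $\tau \in [0,T]$ and $\phinomT \in \Cb$ by the definition of $\Cbi$, and since the left-hand sides of \eqref{eq:backup_constraint} are exactly the derivatives of $h$ and $h_{\rm b}$ taken along the backup-closed-loop flow \eqref{eq: nomFlow}, namely $\nabla h \cdot f_{\rm cl}$ and $\nabla h_{\rm b}\cdot f_{\rm cl}$, the two inequalities follow by substituting $\x = \phinom$ and $\x = \phinomT$.

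The main obstacle is the third step: promoting the raw level-set infimum $\psi$ into an honest class-$\classKinfty$ function. The infimum itself need be neither monotone nor strictly increasing, and over an unbounded $\Cs$ it need not remain finite, so the envelope-plus-identity fix ($\bar\psi + r$) together with the compactness and maximum-theorem justification of continuity is where the real care lies; everything else is bookkeeping once the Nagumo conditions are established. I would also note that only the values of $h$ and $h_{\rm b}$ actually attained along the flow matter, so $\alpha$ and $\alpha_{\rm b}$ may be built on the compact range $[0,\max_{\Cbi} h]$ and extended arbitrarily within $\classKinfty$ beyond it.
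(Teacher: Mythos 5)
The paper itself offers no proof of this lemma --- it is imported by citation from \cite{gurriet_scalable_2020} and \cite{tamasACC_ROM_bCBF} --- so there is no in-paper argument to match; your proposal supplies the standard converse-style construction that those sources rely on, and its architecture is correct. The key reductions are all present: by the pushforward identity $\stmnom f_{\rm cl}(\x) = f_{\rm cl}(\phinom)$ (which you should state explicitly, since the paper defines $\dot h(\phinom,\bu)$ through the sensitivity matrix), both left-hand sides collapse to $\nabla h\cdot f_{\rm cl}$ and $\nabla h_{\rm b}\cdot f_{\rm cl}$ evaluated along the flow; \Cref{lemma: CBI_controlledInv} places $\phinom$ in $\Cbi\subseteq\Cs$ and the definition of $\Cbi$ places $\phinomT$ in $\Cb$, so pointwise bounds on those sets suffice; and the Nagumo sign conditions on $\{h=0\}\cap\Cbi$ and $\partial\Cb$ follow correctly from forward invariance. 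You are also right to flag compactness as an unstated but necessary hypothesis --- the paper's own double-integrator example has an unbounded $\Cs$, and without compactness (or some other uniformity) the lemma is false in general; the cited works assume it.

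The one step that would fail as literally written is the continuity of $\psi$ via Berge's maximum theorem. Lower hemicontinuity of the correspondence $r\mapsto\{\x\in\Cbi : h(\x)=r\}$ requires a nonvanishing gradient throughout the region, but the paper only assumes $\nabla h\neq 0$ on $\partial\Cs$; at interior critical values of $h$ the level sets can degenerate or be empty, so the equality-constrained infimum need not be continuous (or even well defined). The repair is cheap and preserves your construction: define $\psi(r)$ as the positive part of the supremum of $-\nabla h\cdot f_{\rm cl}$ over the \emph{sublevel} region $\{\x\in\Cbi : h(\x)\le r\}$. This is automatically non-decreasing and finite by compactness, vanishes at $r=0$ by the Nagumo condition, and is right-continuous at $0$ by a sequential compactness argument (a violating sequence would accumulate at a boundary point contradicting Nagumo). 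Continuity away from the origin is then not needed: a bounded non-decreasing function that vanishes and is continuous at $0$ can always be majorized by a continuous non-decreasing one (e.g.\ an averaged version of itself), after which your $\alpha(r)=\bar\psi(r)+r$ device yields a genuine class-$\classKinfty$ bound. With that substitution, and the identical argument for $h_{\rm b}$ on $\Cb$, the proof is complete.
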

This therefore leads to the main result of backup CBFs.
\begin{theorem}[\!\!\cite{gurriet_scalable_2020}] \label{thm: backup_cbf}
    There exist class-$\mathcal{K}_{\infty}$ functions $\alpha$, $\alpha_{\rm b}$ and a locally Lipschitz controller ${\boldsymbol{k} \!:\! \mathcal{X} \!\to\! \mathcal{U}}$, ${\boldsymbol{u}\!=\!\boldsymbol{k}(\x)}$ satisfying\!\!
    \begin{subequations} \label{eq: backupCBFMain}
    \begin{align} 
        \dot{h}(\phinom,\boldsymbol{u})
        &\ge - \alpha \big(h(\phinom)\big),\ \forall \tau \in [0,T], \label{eq: htraj_nom}  \\ 
        \!\! \dot{h}_{\rm b}(\phinomT, \boldsymbol{u}) 
        &\ge - \alpha_{\rm b} \big(h_{\rm b}(\phinomT)\big), \label{eq: hb_nom}
    \end{align}
    \end{subequations}
    for all
    ${\x\in\mathcal{\Cbi}}$. Such a controller renders ${\Cbi \subseteq \Cs}$ forward invariant and satisfies input constraints.
\end{theorem}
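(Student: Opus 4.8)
The plan is to split the claim into its two logical halves: first exhibit one admissible triple $(\alpha,\alpha_{\rm b},\boldsymbol{k})$, then show that \emph{any} controller meeting the two inequalities in \eqref{eq: backupCBFMain} renders $\Cbi$ forward invariant. Existence is immediate from the preceding lemma: take $\boldsymbol{k}=\ub$. The backup law is smooth, hence locally Lipschitz, and every candidate controller is by definition a map $\mathcal{X}\to\mathcal{U}$, so the input bound $\boldsymbol{u}\in\mathcal{U}$ holds automatically; moreover \Cref{lemma: backup_constraint} supplies class-$\mathcal{K}_\infty$ functions $\alpha,\alpha_{\rm b}$ for which $\ub$ satisfies exactly \eqref{eq: htraj_nom}--\eqref{eq: hb_nom} at every $\x\in\Cbi$ and every $\tau\in[0,T]$. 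This settles existence and the input-constraint claim; the substance is the invariance statement.

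For invariance I would read $\Cbi$ as an intersection of $0$-superlevel sets of functions of the \emph{current} state. For each fixed $\tau\in[0,T]$ define $H_\tau(\x)\triangleq h(\phinom)$ and also $H_{\rm b}(\x)\triangleq h_{\rm b}(\phinomT)$; then $\x\in\Cbi$ iff $H_\tau(\x)\geq 0$ for all $\tau$ and $H_{\rm b}(\x)\geq 0$. Since these are continuous in $\x$ (the flow depends continuously on its initial condition), $\Cbi$ is closed. Because $\boldsymbol{k}$ is locally Lipschitz the closed-loop system has a unique solution $\x(t)$, and the key observation is that \eqref{eq: backupCBFMain} is precisely the statement that, along this solution, $\frac{d}{dt}H_\tau(\x(t))\geq -\alpha\big(H_\tau(\x(t))\big)$ for each $\tau$ and $\frac{d}{dt}H_{\rm b}(\x(t))\geq -\alpha_{\rm b}\big(H_{\rm b}(\x(t))\big)$. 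Each composite function thus obeys a scalar condition of the form treated in \Cref{thm: cbf}, so the comparison lemma keeps a nonnegative initial value nonnegative: if $\x(0)\in\Cbi$ then $H_\tau(\x(t))\geq 0$ for all $t\geq 0$ and every $\tau$, and likewise $H_{\rm b}(\x(t))\geq 0$. Intersecting over $\tau$ recovers $\x(t)\in\Cbi$ for all $t\geq 0$. I would run this $\tau$-by-$\tau$ rather than through $\min_\tau H_\tau$, so as to sidestep the fact that the latter is only Lipschitz in $\x$.

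The step I expect to be the crux is justifying that $\frac{d}{dt}H_\tau(\x(t))$ really is the quantity bounded in \eqref{eq: htraj_nom}. Differentiating $H_\tau$ along the true dynamics brings in the sensitivity of the backup flow with respect to its initial condition, $\partial\phinom/\partial\x$, which satisfies the variational equation associated with $f_{\rm cl}$; one must verify that this derivative exists and is continuous (it does, as $f$, $g$, $\ub$ are smooth and solutions depend smoothly on initial data) and confirm that the left-hand sides of \eqref{eq: backupCBFMain} encode exactly $\nabla h(\phinom)\,(\partial\phinom/\partial\x)\,(f(\x)+g(\x)\boldsymbol{u})$. A secondary technical point is the usual Nagumo-type circularity: the constraints are assumed only on $\Cbi$, yet we use them to prove the trajectory never leaves $\Cbi$. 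This is handled in the standard way, since $\Cbi$ is closed and the differential inequalities force the closed-loop vector field to be subtangential on its boundary, so the comparison argument applies up to the first possible exit time and therefore for all time.
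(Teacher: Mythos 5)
Your proposal is correct and takes essentially the same route as the paper: the paper's proof is two sentences, citing \Cref{lemma: CBI_controlledInv} (with \Cref{lemma: backup_constraint}) to exhibit $\ub$ as the required Lipschitz controller and then ``applying \Cref{thm: cbf}'' to conclude forward invariance of $\Cbi$ from \eqref{eq: backupCBFMain}. Your write-up simply unpacks that second step --- identifying $\dot h(\phinom,\boldsymbol u)$ with $\tfrac{d}{dt}h(\phinb{\tau}{\x(t)})$ via the sensitivity of the flow and running the comparison argument $\tau$-by-$\tau$ --- which is exactly the content the paper leaves implicit.
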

\begin{proof}
    By \Cref{lemma: CBI_controlledInv}, the backup control law $\ub$ is one such Lipschitz controller. Applying \Cref{thm: cbf} to system \eqref{eq:affine-dynamics}, the satisfaction of \eqref{eq: backupCBFMain} yields the forward invariance of $\Cbi$.
\end{proof}
Because \eqref{eq: htraj_nom} represents an infinite number of constraints, in practice the constraint is discretized and enforced at discrete points along the flow, such that ${\tau \in \{0, \Delta, \dots, \Tt \}}$ where ${\Delta > 0}$ is a discretization step satisfying ${T/\Delta \in \mathbb{N}}$.
Then, one can construct a point-wise optimization problem for the least-invasive safe control signal, like the \eqref{eq:cbf-qp}:
\begin{align*} 
    \boldsymbol{k}_{\rm safe}(\x) = \underset{\boldsymbol{u} \in \mathcal{U}}{\text{argmin}} \mkern9mu &
    \left\Vert \boldsymbol{u}- \boldsymbol{k}_{\rm p}(\x)\right\Vert^{2} \quad \tag{bCBF-QP} \label{eq:bcbf-qp} \\
    \text{s.t.  } 
    & \eqref{eq: htraj_nom}, \ \eqref{eq: hb_nom}.
\end{align*}
\begin{corollary}
    There exist class-$\mathcal{K}_{\infty}$ functions $\alpha$, $\alpha_{\rm b}$ such that the optimization problem posed in \eqref{eq:bcbf-qp} is feasible for all $\x \in \Cbi$,
    even if $h$ is not a CBF.
\end{corollary}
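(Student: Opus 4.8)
The plan is to prove feasibility the way such QP feasibility results are almost always established: by exhibiting one explicit admissible point rather than characterizing the whole feasible set. Since an optimization problem is feasible precisely when its constraint set is nonempty, it suffices to produce a single $\boldsymbol{u}\in\mathcal{U}$ satisfying \eqref{eq: htraj_nom} and \eqref{eq: hb_nom} for each $\x\in\Cbi$. The natural candidate is the backup controller itself, $\boldsymbol{u}=\ub(\x)$, which is exactly the certificate already used implicitly in the proof of \Cref{thm: backup_cbf}.

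The steps I would carry out are as follows. First, I would observe that $\ub:\mathcal{X}\to\mathcal{U}$ by definition, so the input-membership requirement $\boldsymbol{u}\in\mathcal{U}$ holds automatically for $\boldsymbol{u}=\ub(\x)$, with no appeal to any property of $h$. Second, I would invoke \Cref{lemma: backup_constraint}, which furnishes class-$\mathcal{K}_{\infty}$ functions $\alpha,\alpha_{\rm b}$ for which $\ub(\x)$ satisfies the continuous-time constraints \eqref{eq:backup_constraint} for all $\tau\in[0,T]$ and all $\x\in\Cbi$; these are precisely the $\alpha,\alpha_{\rm b}$ the corollary asserts to exist. Third, I would note that the discretized program \eqref{eq:bcbf-qp} enforces \eqref{eq: htraj_nom} only on the finite grid $\tau\in\{0,\Delta,\dots,T\}\subseteq[0,T]$, a subset of the continuum over which \Cref{lemma: backup_constraint} already guarantees satisfaction. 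Hence $\ub(\x)$ satisfies the discrete constraints a fortiori, the constraint set of \eqref{eq:bcbf-qp} is nonempty for every $\x\in\Cbi$, and the program is feasible.

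The argument is essentially direct, so the ``main obstacle'' is mild: it is the bookkeeping observation that feasibility transfers from the continuous problem to the discretized one because the discrete grid is a subset of $[0,T]$, rather than a strengthening or an enlargement of the constraint family. It is also worth emphasizing explicitly where the independence from the CBF property of $h$ enters---unlike the \eqref{eq:cbf-qp}, whose feasibility genuinely requires $h$ to be a CBF, here the constraints are imposed on $h$ and $h_{\rm b}$ evaluated along the flow $\phinom$ of the backup dynamics, which the backup controller renders admissible by construction via \Cref{lemma: CBI_controlledInv} and \Cref{lemma: backup_constraint}. Consequently no CBF hypothesis on $h$ is ever used, which is exactly the claim ``even if $h$ is not a CBF.''
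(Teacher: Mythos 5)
Your proposal is correct and matches the argument the paper intends: the corollary is stated without an explicit proof precisely because it follows immediately from \Cref{lemma: backup_constraint}, which supplies the class-$\mathcal{K}_{\infty}$ functions $\alpha$, $\alpha_{\rm b}$ for which $\boldsymbol{u}=\ub(\x)\in\mathcal{U}$ satisfies \eqref{eq: htraj_nom}--\eqref{eq: hb_nom} on all of $[0,T]$ (hence a fortiori on any discretized grid), exactly as the paper's own proof of \Cref{lemma:mu_qp_feasible} does for the analogous OI-QP statement. Nothing to add.
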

Thus, bCBFs are powerful tools for guaranteeing safety of nonlinear affine systems in the presence of input constraints. 
However, the price for the feasibility guarantees endowed by bCBFs is additional computation. 
Computing the total derivatives of $h$ and $h_{\rm b}$:
\begin{subequations}
\begin{align} 
    \dot{h}(\phinom,\boldsymbol{u}) &=  \nabla h(\phinom) \cdot \stmnom\dot{\x}, \\ 
    \dot{h}_{\rm b}(\phinomT,\boldsymbol{u}) &= \nabla h_{\rm b}(\phinomT) \cdot \stmnomT \dot{\x},
\end{align}
\end{subequations}
where ${\dot{\x}= f(\x) + g(\x)\boldsymbol{u}}$, and ${\stmnom \triangleq \frac{\partial \phinom}{\partial \x}}$,
requires solving a coupled initial value problem for the sensitivities of the flow to changes in the initial condition.
Namely, the sensitivity matrix $\stmnom$ is the solution to
\begin{equation} \label{eq: stm_nominal}
  \begin{gathered} 
    \frac{\partial}{\partial \tau}{{\boldsymbol{\Phi}}}_{\rm b}(\tau, \x) \!=\! \jac(\phinom)\stmnom, \nspace{6}
    \boldsymbol{\Phi}_{\rm b}(0,\x) \! =  \!\boldsymbol{I},
\end{gathered}
\end{equation}
where $\jac(\x)=\partial f_{\rm cl}(\x)/\partial \x$ is the Jacobian of $f_{\rm cl}$ in~\eqref{eq: f_cl}, that is evaluated at $\phinom$, and $\boldsymbol{I}$ is the ${n \!\times\! n}$ identity matrix. 

To compute the constraints in \eqref{eq: backupCBFMain}, at each time step one must forward integrate ${n + n^2}$ ordinary differential equations (ODEs) in~\eqref{eq: nomFlow} and~\eqref{eq: stm_nominal}. For many applications this may be allowable, but for resource-constrained platforms, such as aerospace systems, computing these gradients may be a bottleneck,
especially if the backup horizon is long. Further, formal system certification requirements may exclude methods that
numerically solve a high-dimensional QP online.
\section{Function-Based Controller Blending}\label{sec:OG_blending}

While bCBFs offer an elegant solution to feasible safety-critical control, their main downside is
the additional computation requirements.
To overcome this downside, the authors of \cite{singletary2022_on} offer a solution deemed \textit{safe blending}, which seeks to mix the control inputs from the nominal and the backup controller, to obtain a safe but more performant controller than just the backup controller alone. This blended or \textit{mixed} controller $\boldsymbol{k}_{\rm m} :\mathcal{X}\to\mathcal{U}$ is written as
\begin{equation}
\label{eq:safe_ble1}
    \boldsymbol{k}_{\rm m}(\boldsymbol{x}) = (1 - \mu(\boldsymbol{x}) )\boldsymbol{k}_{\rm p}(\boldsymbol{x}) + \mu(\boldsymbol{x}) \ub(\boldsymbol{x}),
\end{equation}
where $\mu: \R^n \to [0, 1]$ is a blending function that regulates the portion that the backup and primary control signals contribute to the mixed control signal. 
It is trivial to verify that restricting the range of $\mu$ from ${[0, 1]}$ guarantees that for all $\x \in \mathcal{X}$, ${\boldsymbol{k}_{\rm m}(\boldsymbol{x}) \in \mathcal{U}}$ as long as ${\boldsymbol{k}_{\rm p}(\boldsymbol{x}), \ub(\boldsymbol{x}) \in \mathcal{U}}$ and $\mathcal{U}$ is convex.

To ensure safety with the mixed controller, the authors of \cite{singletary2022_on} define a function $h_{\rm I}$ associated with the set $\Cbi$ in  \eqref{def:C_BI} as
\begin{equation*}
\label{eq:h_I}
h_{\rm I}(\boldsymbol{x}) \triangleq \min \left \{  \min_{\tau\in[0,T]} h(\phinom), h_{\rm b}(\phinomT) \right \}.
\end{equation*}
Using this definition, a blending function is proposed as
\begin{equation}
\label{eq:blend_1}
\mu(\boldsymbol{x}) = \Lambda(h_{\rm I}({\boldsymbol{x}})),
\end{equation}
where ${\Lambda : \mathbb{R} \to [0,1]}$ is a continuous function that satisfies ${\Lambda(r) = 1}$ when ${r \leq 0}$, and ${\frac{d \Lambda}{d r} < 0}$ when ${r > 0}$.  
For example,
\begin{equation}
\label{eq:tun_par}
    \Lambda(h_{\rm I}({\boldsymbol{x}})) = {\rm e}^{-\eta \max \{h_{\rm I}(\boldsymbol{x}),\!~0 \} },
\end{equation}
with ${\eta >0}$. This particular choice of blending function and blending controller is provably safe.
\begin{proposition}[\hspace{-0.01em}{\cite[Proposition 1]{singletary_ICRA2020}}]
\textit{
    Any locally Lipschitz mixed controller ${\um:\mathcal{X} \to \mathcal{U}}$ given by \eqref{eq:safe_ble1}, with a blending function $\mu$ in \eqref{eq:blend_1}
    and ${\Lambda : \mathbb{R} \to [0,1]}$ satisfying ${\Lambda(0)=1}$, renders $\Cbi$ forward invariant along \eqref{eq:affine-dynamics}.
}
\begin{proof}
    At the boundary of $\Cbi$, where ${h_{\rm I}(\boldsymbol{x}) = 0}$, the blending parameter in~\eqref{eq:safe_ble1} is ${\mu(\boldsymbol{x}) = 1}$ and the mixed controller in \eqref{eq:blend_1} reduces to ${\boldsymbol{k}_{\rm m}(\boldsymbol{x}) = \ub(\boldsymbol{x})}$.
    Therefore, because the set $\Cbi$ is forward invariant under the backup controller based on Lemma~\ref{lemma: CBI_controlledInv},
    $\Cbi$ is rendered forward invariant under $\um$ as well. Furthermore, $\um(\boldsymbol{x}) \in \mathcal{U}$ for all ${\boldsymbol{x} \in \mathcal{X}}$ by design.
\end{proof}
\end{proposition}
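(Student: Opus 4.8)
The plan is to establish forward invariance through a boundary (Nagumo-type) argument, leveraging the fact that the blending function is designed to ``switch on'' the backup controller precisely where it is needed. The key structural observation is that $\mu$ depends on the state only through $h_{\rm I}$, and that $\mu$ attains its maximal value $1$ exactly on the boundary of $\Cbi$. Combined with the convexity of $\mathcal{U}$, this lets me inherit both the forward-invariance guarantee of the backup controller from \Cref{lemma: CBI_controlledInv} and the input-constraint satisfaction essentially for free.

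First I would characterize the boundary. Since $h_{\rm I}$ is continuous (a pointwise minimum, over the compact interval $[0,T]$ and over the two terminal terms, of continuous functions of $\x$) and $\Cbi = \{\x \in \mathcal{X} : h_{\rm I}(\x) \ge 0\}$, we have $\partial\Cbi \subseteq \{\x : h_{\rm I}(\x) = 0\}$. On this set $\mu(\x) = \Lambda(h_{\rm I}(\x)) = \Lambda(0) = 1$, so the mixed controller \eqref{eq:safe_ble1} collapses to $\um(\x) = \ub(\x)$. Consequently the closed-loop vector field under $\um$, namely $f(\x) + g(\x)\um(\x)$, coincides on $\partial\Cbi$ with $f_{\rm cl}(\x)$ from \eqref{eq: f_cl}.

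Second, I would convert this pointwise agreement into forward invariance. By \Cref{lemma: CBI_controlledInv}, $\Cbi$ is forward invariant under $f_{\rm cl}$. For a closed set and a locally Lipschitz vector field, forward invariance is equivalent, by Nagumo's theorem, to a subtangentiality condition that depends only on the vector-field values on $\partial\Cbi$: at every boundary point the field must lie in the tangent (contingent) cone $T_{\Cbi}(\x)$. Since $f_{\rm cl}$ satisfies this, and the $\um$-closed-loop field equals $f_{\rm cl}$ on $\partial\Cbi$, the $\um$-closed-loop field satisfies the identical condition, and Nagumo's theorem returns forward invariance of $\Cbi$ under $\um$. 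Finally, for the input bound, $\um(\x)$ is a convex combination of $\udes(\x),\ub(\x)\in\mathcal{U}$ with weights $1-\mu(\x),\mu(\x)\in[0,1]$, so convexity of $\mathcal{U}$ gives $\um(\x)\in\mathcal{U}$ for all $\x$.

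The main obstacle is the rigorous justification of ``agreement on the boundary implies forward invariance,'' because $h_{\rm I}$ is a pointwise minimum and so $\partial\Cbi$ need not be a smooth manifold while $h_{\rm I}$ is generally only continuous. This rules out the elementary gradient form of Nagumo's condition and forces the general closed-set version phrased via contingent cones (or, equivalently, a viability argument). A concise proof can sidestep this by appealing directly to the forward-invariance conclusion of \Cref{lemma: CBI_controlledInv}; the point I would want to make explicit is that forward invariance of a closed set is a boundary property, so replacing $\um$ by $\ub$ in the interior of $\Cbi$ cannot create an escaping trajectory.
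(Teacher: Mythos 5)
Your proof is correct and follows essentially the same route as the paper's: on the boundary of $\Cbi$ the blending function forces $\mu(\x)=1$, the mixed controller collapses to $\ub(\x)$, forward invariance is inherited from Lemma~\ref{lemma: CBI_controlledInv}, and the input bound follows from convexity of $\mathcal{U}$. The only difference is that you make explicit, via the contingent-cone form of Nagumo's theorem, the step from ``the closed-loop fields agree on $\partial\Cbi$'' to ``forward invariance transfers'' --- a step the paper asserts in one line --- and you correctly identify why the general closed-set version is needed ($h_{\rm I}$ is only continuous, so $\partial\Cbi$ need not be smooth).
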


The mixed controller in \eqref{eq:safe_ble1} is not only provably safe, but it also circumvents the need to solve an optimization problem online and to calculate the sensitivity of the backup flow by solving $n^2$ ODEs in~\eqref{eq: stm_nominal}.
However, the mixed controller can lead to undesirable behavior near the boundary of $\Cbi$ if the goals of the primary controller and backup controller conflict.
We use a simple example to highlight this point.

\begin{example}[Double Integrator]\label{ex:db_int_osc}
Consider a double integrator system as in \cite{vanWijk_DRbCBF_24} given by
\begin{align} \label{eq: db_int}
    \dot{\boldsymbol{x}} = 
    \begin{bmatrix}
        x_2 &
        u
    \end{bmatrix}^{\top},
\end{align}
with position $x_1$, velocity $x_2$, state ${\boldsymbol{x} = [x_1~x_2]^\top \in \mathbb{R}^2}$, and control input ${u \in \mathcal{U} = [-1, 1]}$. The safe set is defined as the left half-plane
\begin{align*}
    \Cs \triangleq \{\boldsymbol{x} \in \mathbb{R}^2 : -x_1 \geq 0 \}.
\end{align*}
The backup control law, ${\boldsymbol{k}_{\rm b}(\boldsymbol{x}) = -1}$, brings the system to the backup set ${\Cb \triangleq \{\boldsymbol{x} \in \mathbb{R}^2 : -x_1 \geq 0, -x_2 \geq 0 \}}$ and renders $\Cb$ forward invariant.
The primary controller, ${\boldsymbol{k}_{\rm p}(\boldsymbol{x}) = 1}$, drives \eqref{eq: db_int} to the unsafe right half-plane.

\begin{figure}
\centering
\begin{subfigure}{1\columnwidth}
\hspace{-.364cm}
\centerline{\includegraphics[width=1\columnwidth]{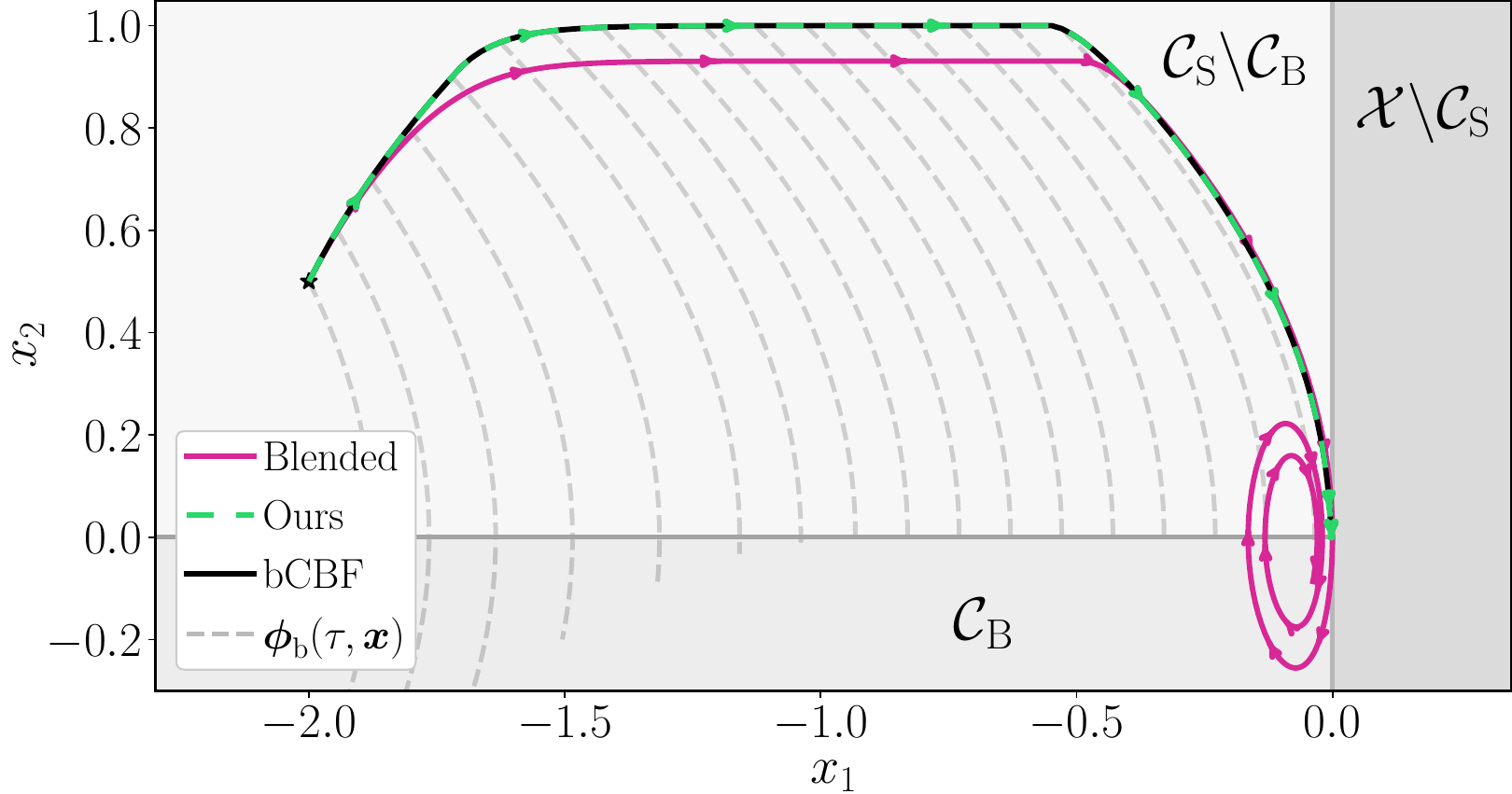}}
\vspace{.2cm}
\end{subfigure}
\begin{subfigure}{1\columnwidth}
\centerline{\includegraphics[width=1\columnwidth]{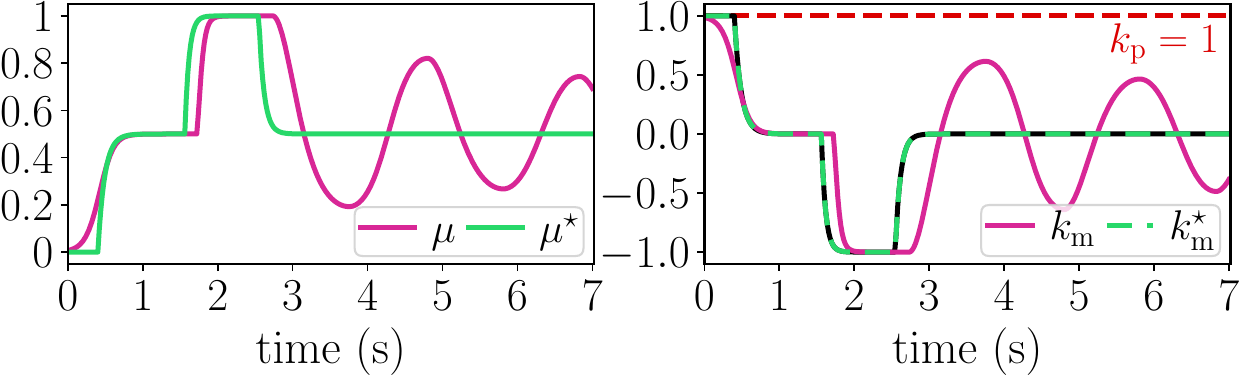}}
\end{subfigure}
\caption{Simulation of the double integrator \eqref{eq: db_int} comparing the \eqref{eq:bcbf-qp} (black), the function-based blended controller in \eqref{eq:safe_ble1} with tuning parameter \eqref{eq:blend_1}-\eqref{eq:tun_par} (pink) and the proposed~\eqref{eq:blend-qp} approach (green dashed). The blended controller experiences undesirable oscillations at the boundary of the safe region (\textbf{top}), while our approach does not. In this scenario, our approach is equivalent to the \eqref{eq:bcbf-qp} (\textbf{bottom right}), but with significantly less computational overhead.}
\label{fig:db_int_oscillations}
\end{figure}

\Cref{fig:db_int_oscillations} compares the \eqref{eq:bcbf-qp} to the blended controller from \eqref{eq:safe_ble1} using the blending function in \eqref{eq:blend_1}-\eqref{eq:tun_par}. Initially, the behavior of the controllers is similar; however, at the origin, the blended controller exhibits undesirable oscillations. This is due to the competing goals of the backup and the primary controller, which the blended controller mixes. The primary controller seeks to enter the right half-plane while the backup controller seeks to apply maximum braking. Thus, near the boundary of the safe set, the blended controller modulates the importance between these two control signals, but cannot do so in an optimal manner, as done by the \eqref{eq:bcbf-qp}. This motivates our central contribution to find a simple and computationally efficient method for input-constrained safe control without such undesirable behavior.
\end{example}
\section{Optimal Controller Interpolation}
\label{sec:theory}

As discussed in \Cref{sec:OG_blending}, the main advantage of function-based blending is that it is a \textit{gradient-free} approach (i.e., computing the sensitivity via \eqref{eq: stm_nominal} is not needed) and it does not require solving a QP online. However, the choice of blending function is subjective and may lead to undesirable behavior (see \Cref{ex:db_int_osc}). Therefore, we propose a middle ground between function-based blending and the \eqref{eq:bcbf-qp}, which seeks to select an optimal blending parameter.

We introduce the following optimization problem, where the parameter $\mu$ is minimized subject to conditions on the forward invariance of the implicit set ${\Cbi \subseteq \Cs}$, which we denote as the \textit{optimally interpolated (OI)} controller:
\begin{align*} 
    ~\mu^{\star}(\x) &= \underset{\mu \in [0,1]}{\text{argmin}} \mkern9mu 
     \frac{1}{2} \mu^2 \tag{OI-QP} \label{eq:blend-qp} \\
    \text{s.t.  }
    &\dot{h}(\phinom, \boldsymbol{k}_{\rm m}(\x)) \geq -\alpha \big( h(\phinom) \big),\ \forall \tau \in [0,T], \\
    &\dot{h}_{\rm b}(\phinomT, \boldsymbol{k}_{\rm m}(\x)) \geq -\alpha_{\rm b} \big( h_{\rm b}(\phinomT) \big).
\end{align*}
Based on~\eqref{eq:safe_ble1}, the total derivatives are computed as
\begin{subequations} \label{eq:hdot_blended}
\begin{align} 
    \dot{h}(&\phinom, \boldsymbol{k}_{\rm m}(\x)) = \dot{h} \big( \phinom, \udes(\x) \big) \\
    & + \mu \Big( \dot{h} \big( \phinom, \ub(\x) \big) - \dot{h} \big( \phinom, \udes(\x) \big) \Big), \nonumber \\
    \dot{h}_{\rm b}(&\phinomT, \boldsymbol{k}_{\rm m}(\x)) = \dot{h}_{\rm b} \big( \phinomT, \udes(\x) \big) \\
    & + \mu \Big( \dot{h}_{\rm b} \big( \phinomT, \ub(\x) \big) - \dot{h}_{\rm b} \big( \phinomT, \udes(\x) \big) \Big), \nonumber
\end{align} 
\end{subequations}
leading to a QP with affine constraints in $\mu$.

Similar to the~\eqref{eq:bcbf-qp}, the~\eqref{eq:blend-qp} can be shown to be feasible and guarantee safety.
\begin{lemma} \label{lemma:mu_qp_feasible}
    There exist class-$\mathcal{K}_{\infty}$ functions $\alpha$, $\alpha_{\rm b}$ such that the \eqref{eq:blend-qp} is feasible for all ${\x \in \Cbi}$.
\end{lemma}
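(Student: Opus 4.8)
The plan is to exhibit an explicit feasible point for the \eqref{eq:blend-qp} that works uniformly over $\Cbi$, namely $\mu = 1$, and to choose $\alpha$ and $\alpha_{\rm b}$ to be precisely the class-$\mathcal{K}_{\infty}$ functions whose existence is guaranteed by \Cref{lemma: backup_constraint}. Since the objective $\tfrac{1}{2}\mu^2$ is continuous and the optimization domain $[0,1]$ is compact, feasibility of the \eqref{eq:blend-qp} reduces entirely to showing that its feasible set is nonempty for every $\x \in \Cbi$; once a feasible point is produced, existence of the minimizer $\mu^{\star}(\x)$ follows from the extreme value theorem.

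First I would evaluate the mixed controller \eqref{eq:safe_ble1} at $\mu = 1$. By construction this collapses the convex combination to the backup controller, $\boldsymbol{k}_{\rm m}(\x) = \ub(\x)$. Substituting $\mu = 1$ into the total-derivative expressions \eqref{eq:hdot_blended}, the terms proportional to $\udes(\x)$ cancel, leaving $\dot{h}(\phinom, \boldsymbol{k}_{\rm m}(\x)) = \dot{h}(\phinom, \ub(\x))$ and $\dot{h}_{\rm b}(\phinomT, \boldsymbol{k}_{\rm m}(\x)) = \dot{h}_{\rm b}(\phinomT, \ub(\x))$.

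Next I would invoke \Cref{lemma: backup_constraint} directly: with the chosen $\alpha, \alpha_{\rm b}$ it guarantees $\dot{h}(\phinom, \ub(\x)) \geq -\alpha(h(\phinom))$ for all $\tau \in [0,T]$ and $\dot{h}_{\rm b}(\phinomT, \ub(\x)) \geq -\alpha_{\rm b}(h_{\rm b}(\phinomT))$, for every $\x \in \Cbi$. Combined with the cancellation above, this shows that $\mu = 1$ satisfies both constraints of the \eqref{eq:blend-qp}. Hence $\mu = 1 \in [0,1]$ is feasible, so the feasible set is nonempty and the \eqref{eq:blend-qp} is feasible for all $\x \in \Cbi$.

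The only genuinely delicate point is the order of quantifiers: the statement asserts the existence of a single pair $(\alpha, \alpha_{\rm b})$ making the \eqref{eq:blend-qp} feasible, so I must reuse the same $\alpha, \alpha_{\rm b}$ that \Cref{lemma: backup_constraint} supplies rather than letting them vary with $\x$ or $\tau$. Because \Cref{lemma: backup_constraint} already provides these functions uniformly over $\tau \in [0,T]$ and $\x \in \Cbi$, this causes no difficulty, and the argument requires no new estimates beyond what that lemma delivers. There is therefore no substantial obstacle; the feasibility is inherited entirely from the backup controller acting as a certified safe fallback at $\mu = 1$.
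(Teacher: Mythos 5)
Your proof is correct and follows essentially the same route as the paper's: both exhibit $\mu=1$ as a feasible point by observing that the constraints of the \eqref{eq:blend-qp} collapse to those of \eqref{eq:backup_constraint} via \eqref{eq:hdot_blended}, and then invoke \Cref{lemma: backup_constraint} to supply the class-$\mathcal{K}_{\infty}$ functions $\alpha$, $\alpha_{\rm b}$. Your additional remarks on compactness and quantifier order are fine but not needed beyond what the paper states.
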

\begin{proof}
    Based on~\eqref{eq:hdot_blended}, the constraints of the~\eqref{eq:blend-qp} reduce to those in~\eqref{eq:backup_constraint} for ${\mu=1}$.
    Therefore, based on Lemma~\ref{lemma: backup_constraint}, there exist $\alpha$, $\alpha_{\rm b}$ such that ${\mu=1}$ satisfies the constraints of the optimization problem, which implies feasibility.
\end{proof}

\begin{theorem} \label{thm:safety_oi}
    Consider the system \eqref{eq:affine-dynamics} with a
    primary controller $\udes$,
    a backup controller $\ub$,
    and the set $\Cbi$ in~\eqref{def:C_BI}.
    The 
    controller $\boldsymbol{k}^{\star}_{\rm m} : \mathcal{X} \rightarrow \mathcal{U}$ given by
    \begin{align}
    \label{eq:optimal_blended_controller}
        \boldsymbol{k}^{\star}_{\rm m}(\x) = (1 - \mu^{\star}(\x) )\udes(\x) + \mu^{\star}(\x) \ub(\x),
    \end{align}
    where ${\mu^{\star}(\x)}$ satisfies the~\eqref{eq:blend-qp},
    renders ${\Cbi \subseteq \Cs}$ forward invariant and satisfies the input constraint, ${\boldsymbol{k}^{\star}_{\rm m}(\x) \in \mathcal{U}}$ for all ${\x \in \mathcal{X}}$.
\end{theorem}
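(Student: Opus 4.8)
The plan is to verify that $\boldsymbol{k}^{\star}_{\rm m}$ meets the two hypotheses of \Cref{thm: backup_cbf}—namely, that it is a locally Lipschitz $\mathcal{U}$-valued controller, and that it satisfies the flow constraints \eqref{eq: htraj_nom}--\eqref{eq: hb_nom}—and then to invoke that theorem to conclude forward invariance of $\Cbi$. Feasibility of the \eqref{eq:blend-qp}, already established in \Cref{lemma:mu_qp_feasible}, guarantees that $\mu^{\star}(\x)$ is well defined for every $\x \in \Cbi$, so both parts of the argument have a valid starting point.

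First I would dispatch the input constraint. Because the decision variable in the \eqref{eq:blend-qp} is restricted to $[0,1]$, we have $\mu^{\star}(\x) \in [0,1]$, and \eqref{eq:optimal_blended_controller} expresses $\boldsymbol{k}^{\star}_{\rm m}(\x)$ as a convex combination of $\udes(\x)$ and $\ub(\x)$. Since both controllers take values in $\mathcal{U}$ and $\mathcal{U}$ is convex, the combination lies in $\mathcal{U}$; this is precisely the observation used for the blended controller in \Cref{sec:OG_blending} and requires no new work. Next I would establish the flow constraints: by definition of the $\mathrm{argmin}$, $\mu^{\star}(\x)$ satisfies the two inequality constraints of the \eqref{eq:blend-qp}, and substituting $\boldsymbol{u} = \boldsymbol{k}^{\star}_{\rm m}(\x)$ through the affine-in-$\mu$ expansions \eqref{eq:hdot_blended} shows these constraints are literally \eqref{eq: htraj_nom} and \eqref{eq: hb_nom} evaluated at $\boldsymbol{k}^{\star}_{\rm m}(\x)$, for all $\tau \in [0,T]$ and all $\x \in \Cbi$. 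Hence, modulo the regularity discussed below, $\boldsymbol{k}^{\star}_{\rm m}$ is a controller of exactly the type posited in \Cref{thm: backup_cbf}, which then directly yields forward invariance of $\Cbi \subseteq \Cs$.

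The main obstacle is the regularity requirement: \Cref{thm: backup_cbf}, through \Cref{thm: cbf}, needs $\boldsymbol{k}^{\star}_{\rm m}$ to be locally Lipschitz so that the closed-loop system admits a unique solution, and this reduces to showing that the solution map $\x \mapsto \mu^{\star}(\x)$ is locally Lipschitz. I would exploit the scalar, strictly convex structure of the problem: with the objective $\tfrac12\mu^2$ increasing on $[0,1]$ and $\mu=1$ always feasible by \Cref{lemma:mu_qp_feasible}, the optimizer $\mu^{\star}(\x)$ is simply the smallest feasible point, i.e. a clamp to $[0,1]$ of a supremum over $\tau\in[0,T]$ of the ratios $-a_\tau(\x)/c_\tau(\x)$ coming from those constraints with positive denominator $c_\tau(\x) = \dot{h}(\phinom,\ub(\x)) - \dot{h}(\phinom,\udes(\x))$, built from the locally Lipschitz data appearing in \eqref{eq:hdot_blended}.

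The delicate point is that these denominators $c_\tau(\x)$ may vanish, so the ratios are of indeterminate $0/0$ type in the limit. Here I would use feasibility of $\mu=1$ to show the ratios stay self-consistently bounded—a constraint with small positive $c_\tau$ and negative $a_\tau$ forces $c_\tau \ge -a_\tau$, keeping $-a_\tau/c_\tau \le 1$—and then argue that the clamped supremum of Lipschitz functions remains Lipschitz wherever a ratio actively binds. Verifying this quantitatively, ideally by appealing to the closed form for $\mu^{\star}$ derived later in the paper (which plays the role of $\lambda$ in \eqref{eq:lambda_qp} for the CBF-QP), is where the real effort lies; once local Lipschitzness is secured, the forward-invariance and input-constraint conclusions follow immediately from the steps above.
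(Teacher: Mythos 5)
Your proof takes essentially the same route as the paper's: the paper's entire argument is that the constraints of the \eqref{eq:blend-qp} are, via \eqref{eq:hdot_blended}, exactly \eqref{eq: htraj_nom}--\eqref{eq: hb_nom} evaluated at $\boldsymbol{k}^{\star}_{\rm m}(\x)$, so \Cref{thm: backup_cbf} gives forward invariance, while the input constraint follows from $\mu^{\star}(\x)\in[0,1]$, $\udes(\x),\ub(\x)\in\mathcal{U}$, and convexity of $\mathcal{U}$. The one place you go beyond the paper is the local Lipschitzness of $\x\mapsto\mu^{\star}(\x)$: the paper's proof does not address this at all, even though \Cref{thm: backup_cbf} formally requires a locally Lipschitz controller, so your flagging of the potential $0/0$ degeneracy in the ratios $-a_i/b_i$ as indices enter or leave the active set is a legitimate refinement rather than a detour; you correctly leave it as the unresolved ``real effort,'' and indeed neither your sketch nor the paper fully closes that regularity gap.
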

\begin{proof}
    The forward invariance of $\Cbi$ follows from Theorem~\ref{thm: backup_cbf}, since the controller in~\eqref{eq:optimal_blended_controller} satisfies the constraints in~\eqref{eq: backupCBFMain} by construction.
    The controller in~\eqref{eq:optimal_blended_controller} satisfies the input constraint because for all $\x \in \mathcal{X}$, ${\udes(\x) \in \mathcal{U}}$, ${\ub(\x) \in \mathcal{U}}$, ${\mu^\star(\x) \in [0,1]}$, and $\mathcal{U}$ is convex.
\end{proof}

While maintaining feasibility and safety, the~\eqref{eq:blend-qp} offers major computational advantages over the~\eqref{eq:bcbf-qp}.
Namely, the~\eqref{eq:blend-qp} can be solved in closed-form, which eliminates the need for implementing a high-dimensional QP solver.
Moreover, the number of ODEs to be integrated at each time step can be reduced by using  the~\eqref{eq:blend-qp}.
At the same time, the~\eqref{eq:blend-qp} can also eliminate the undesirable behavior observed for the function-based blending approach.
We demonstrate these properties next.
We remark that the price of these benefits is the restriction of the control input to the linear combination of the primary and backup control signals as in~\eqref{eq:optimal_blended_controller} as opposed to the~\eqref{eq:bcbf-qp} that offers a point-wise optimal input from the entire input set $\mathcal{U}$.

\subsection{Closed-form Solution for Optimal Interpolation}
First, we present the solution of the~\eqref{eq:blend-qp} in closed-form.
As a preliminary, we introduce the notation:
\begin{subequations} \label{eq:a_b_coefficients}
\begin{align}
    & a(\tau, \x) \!\triangleq\! \dot{h} \big( \phinom, \udes(\x) \big) \!+\! \alpha \big( h(\phinom) \big), \\
    & b(\tau, \x) \!\triangleq\! \dot{h} \big( \phinom, \ub(\x) \big) \!-\! \dot{h} \big( \phinom, \udes(\x) \big), \\
    & a_{\rm b}(\x) \!\triangleq\! \dot{h}_{\rm b} \big( \phinomT, \udes(\x) \big) \!+\! \alpha_{\rm b} \big( h_{\rm b}(\phinomT) \big), \\
    & b_{\rm b}(\x) \!\triangleq\! \dot{h}_{\rm b} \big( \phinomT,\! \ub(\x) \big) \!-\! \dot{h}_{\rm b} \big( \phinomT,\! \udes(\x) \big)\!,\!\!
\end{align} 
\end{subequations}
which simplifies the constraints to
${a(\tau,\x) + b(\tau,\x) \mu \geq 0}$ and
${a_{\rm b}(\x) + b_{\rm b}(\x) \mu \geq 0}$.
Then, as for the \eqref{eq:bcbf-qp}, we discretize the constraints on the safety of the backup flow
by introducing
${a_i(\x) = a(\tau_i,\x)}$ and
${b_i(\x) = b(\tau_i,\x)}$
with ${\tau_i = i \Delta}$, ${\Delta = T/N}$, ${i \in \{0, 1, \dots, N\}}$.
This leads to
\begin{align} 
    ~\mu^{\star}(\x) = \underset{\mu \in \R}{\text{argmin}} \mkern9mu 
    & \frac{1}{2} \mu^2
    \label{eq:blend-qp-discrete} \\
    \text{s.t.  }
    & a_i(\x) + b_i(\x) \mu \geq 0, \nonumber \
    \forall i \in \mathcal{I}, \nonumber
\end{align}
where
${\mathcal{I} = \{0, 1, \dots, N+3\}}$,
whereas
${a_{N+1}(\x) = a_{\rm b}(\x)}$, 
${b_{N+1}(\x) = b_{\rm b}(\x)}$,
while the domain ${\mu \in [0,1]}$ is captured by
${a_{N+2}(\bx) = 0}$,
${b_{N+2}(\bx) = 1}$,
${a_{N+3}(\bx) = 1}$,
${b_{N+3}(\bx) = -1}$.

\begin{theorem} \label{thm:explicit_solution}
    The QP~\eqref{eq:blend-qp-discrete} has the closed-form solution
    \begin{align} \label{eq:qp_solu_max}
        \mu^{\star}(\x) =
        \max_{i \in \mathcal{I}_+(\x)} \Big\{ -\frac{a_i(\x)}{b_i(\x)} \Big\},
    \end{align}
    where ${\mathcal{I}_+(\x) \triangleq \{i \in \mathcal{I} : b_i(\x) > 0 \}}$.
\end{theorem}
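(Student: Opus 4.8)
The plan is to exploit the fact that \eqref{eq:blend-qp-discrete} is a one-dimensional, strictly convex program: the objective $\tfrac12\mu^2$ has its unique unconstrained minimizer at $\mu=0$ and is strictly increasing in $|\mu|$, while each constraint $a_i(\x)+b_i(\x)\mu\ge 0$ is affine in the scalar $\mu$. First I would observe that the feasible set is an intersection of half-lines, hence a (possibly empty) closed interval, and that \Cref{lemma:mu_qp_feasible} guarantees this interval is nonempty since $\mu=1$ is feasible. The whole argument then reduces to locating the correct endpoint.

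The key step is the sign-based case analysis on $b_i(\x)$. A constraint with $b_i(\x)>0$ rearranges to $\mu\ge -a_i(\x)/b_i(\x)$, i.e.\ a \emph{lower} bound on $\mu$; one with $b_i(\x)<0$ gives an \emph{upper} bound $\mu\le -a_i(\x)/b_i(\x)$; and one with $b_i(\x)=0$ imposes the $\mu$-independent condition $a_i(\x)\ge 0$, which holds by the feasibility established in \Cref{lemma:mu_qp_feasible}. Collecting these, the feasible interval is $[\mu_{\min},\mu_{\max}]$ with $\mu_{\min}=\max_{i\in\mathcal{I}_+(\x)}\{-a_i(\x)/b_i(\x)\}$ and $\mu_{\max}=\min_{i:\,b_i(\x)<0}\{-a_i(\x)/b_i(\x)\}$, and feasibility yields $\mu_{\min}\le\mu_{\max}$.

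To finish, I would use the two appended domain rows. The row $N+2$ has $a_{N+2}(\x)=0$, $b_{N+2}(\x)=1$, so $N+2\in\mathcal{I}_+(\x)$ and $-a_{N+2}(\x)/b_{N+2}(\x)=0$; hence $\mathcal{I}_+(\x)$ is always nonempty and $\mu_{\min}\ge 0$. Consequently the feasible interval lies in $[0,\infty)$, where $\tfrac12\mu^2$ is strictly increasing, so the minimizer is attained at the left endpoint $\mu^\star(\x)=\mu_{\min}=\max_{i\in\mathcal{I}_+(\x)}\{-a_i(\x)/b_i(\x)\}$, which is exactly \eqref{eq:qp_solu_max}.

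I expect the main subtlety to be bookkeeping rather than depth: specifically, correctly handling the degenerate rows with $b_i(\x)=0$ (which contribute no bound and must be absorbed into the feasibility hypothesis), and confirming that $\mathcal{I}_+(\x)\neq\emptyset$ so that the $\max$ in \eqref{eq:qp_solu_max} is well-defined---a fact secured entirely by the domain-encoding row $N+2$. Everything else is a direct consequence of the monotonicity of $\tfrac12\mu^2$ on the nonnegative reals.
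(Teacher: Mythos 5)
Your proof is correct, but it takes a genuinely different route from the paper. The paper proves the result via convex duality: it invokes Slater's condition and strong duality, guesses the candidate primal--dual pair $\mu^\star=-a_j/b_j$, $\lambda_j=-a_j/b_j^2$, $\lambda_i=0$ for $i\neq j$, and verifies all four KKT conditions, with the fiddliest step being primal feasibility for indices with $b_i<0$ (handled by deducing $-a_i/b_i\geq -a_j/b_j$ from feasibility of the QP). You instead exploit one-dimensionality directly: the feasible set is an interval $[\mu_{\min},\mu_{\max}]$ obtained by sorting constraints into lower bounds ($b_i>0$), upper bounds ($b_i<0$), and vacuous-or-infeasible rows ($b_i=0$, absorbed into \Cref{lemma:mu_qp_feasible}); the domain row $N+2$ forces $\mathcal{I}_+(\x)\neq\emptyset$ and $\mu_{\min}\geq 0$; and monotonicity of $\tfrac12\mu^2$ on $[0,\infty)$ pins the minimizer to the left endpoint. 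Both arguments lean on the same two facts --- feasibility from \Cref{lemma:mu_qp_feasible} and the role of row $N+2$ in guaranteeing $a_j\leq 0$, equivalently $\mu_{\min}\geq 0$ --- but your interval argument is more elementary and makes the geometry transparent, while the paper's KKT verification additionally produces the dual multipliers explicitly, which is the form that would extend to the multi-dimensional blending variable $\boldsymbol{\mu}$ discussed at the end of Section~\ref{sec:theory}, where no interval structure is available. No gap: your handling of the $b_i=0$ rows and the nonemptiness of $\mathcal{I}_+(\x)$ covers the only degenerate cases.
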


\begin{proof}
    We first notice that the constraints of the discretized QP~\eqref{eq:blend-qp-discrete} are a subset of the constraints of the~\eqref{eq:blend-qp}.
    Therefore, the QP~\eqref{eq:blend-qp-discrete} is feasible by \Cref{lemma:mu_qp_feasible} (considering appropriate $\alpha$ and $\alpha_{\rm b})$.
    Because the primal problem in \eqref{eq:blend-qp-discrete} is a QP and is feasible, Slater's conditions hold and thus strong duality holds. Therefore, any points (${\mu^{\star},\lambda_i}$) that satisfy the Karush--Kuhn--Tucker (KKT) conditions are primal and dual optimal, and have zero duality gap \cite[Ch. 5]{Boyd_Vandenberghe_2004}. That is, the solution of the QP~\eqref{eq:blend-qp-discrete} must satisfy the KKT conditions:
    \begin{subequations} \label{eq:KKT}
    \begin{align}
        a_i + b_i \mu^{\star} \geq 0, \label{eq:primal_feas}\\
        \lambda_i \geq 0, \label{eq:dual_feas} \\
        \lambda_i(a_i + b_i\mu^{\star}) = 0, \label{eq:stationary} \\
        \mu^{\star} - \sum_{i\in\mathcal{I}}\lambda_i b_i = 0, \label{eq:slackness}
    \end{align}
    \end{subequations}
    for all ${i \in \mathcal{I}}$.
    Note that, throughout this proof, we omit the dependencies on $\x$ to simplify our notation.
    
    For ease of notation, let
    ${j \nspace{-2}\triangleq\nspace{-2} {\rm{argmax}}_{i\in\mathcal{I}_{+}}\{-a_i/b_i\}}$. If multiple indices yield the maximum, $j$ can be any of them.
    We prove that the KKT conditions in \eqref{eq:KKT} are satisfied by 
    \begin{subequations} \label{eq:opt_conds}
    \begin{align}
        \mu^{\star} &= -\frac{a_j}{b_j}, \label{eq:opt1} \\
        \lambda_i &= 0, \nspace{6} \forall \nspace{1} i \nspace{1} \in \mathcal{I} \nspace{2} \backslash \nspace{-1} \{ j \}, \label{eq:opt2} \\
        \lambda_j &= -\frac{a_j}{b^2_j}, \label{eq:opt3}
    \end{align}
    \end{subequations}
    where $\mu^{\star}$ matches~\eqref{eq:qp_solu_max}.
    Note that, by definition of $j$,
    \begin{subequations}
    \begin{align}
        -\frac{a_j}{b_j} &\geq -\frac{a_i}{b_i}, \ \forall i \in \mathcal{I}_{+}, \label{eq:a_b_max} \\
        b_j&>0 \ {\rm and} \ a_j \leq 0, \label{eq:a_b_signs}
    \end{align}
    \end{subequations}
    where ${a_j \leq 0}$ holds because ${a_{N+2}=0}$ and ${b_{N+2}=1}$.
    Then, the expressions in~\eqref{eq:opt_conds} satisfy \eqref{eq:slackness} because
    \begin{align*}
    \sum_{i\in\mathcal{I}}\lambda_i b_i\nspace{-2}\! \overset{\eqref{eq:opt2}}{=} \nspace{-2}\! \lambda_j b_j \nspace{-2}\!\overset{\eqref{eq:opt3}}{=}\nspace{-2}\!
    -\frac{a_j}{b_j}
    \nspace{-2}\! \overset{\eqref{eq:opt1}}{=}\nspace{-2}\!
    \mu^{\star}.
    \end{align*}
    The condition in \eqref{eq:stationary} holds, because \eqref{eq:opt2} gives ${\lambda_i = 0}$ for ${i \nspace{1} \in \mathcal{I} \nspace{2} \backslash \nspace{-1} \{ j \}}$, whereas  \eqref{eq:opt1} gives ${a_i + b_i\mu^{\star} = 0}$ for ${i=j}$.
    The condition in \eqref{eq:dual_feas} holds with \eqref{eq:opt2} and \eqref{eq:opt3}-\eqref{eq:a_b_signs}.
    
    Finally, to show that \eqref{eq:primal_feas} holds,
    we leverage the fact that the QP~\eqref{eq:blend-qp-discrete} is feasible (per \Cref{lemma:mu_qp_feasible}), that is, there exists $\mu$ such that ${a_i + b_i \mu \geq 0}$ for all ${i \in \mathcal{I}}$.
    This implies
    ${a_i \geq 0}$ for ${b_i=0}$,
    ${\mu \leq -a_i/b_i}$ for ${b_i<0}$, and
    ${\mu \geq -a_i/b_i}$ for ${b_i>0}$.
    Therefore, the following must hold for $\mu$ to exist:
    \begin{subequations}
    \begin{align}
        a_i \geq 0 \ &{\rm if}\ b_i = 0, \label{eq:feasiblility_bzero} \\
        -\frac{a_i}{b_i} \geq -\frac{a_j}{b_j} \ &{\rm if}\ b_i<0, \label{eq:feasiblility_bneg}
    \end{align}
    \end{subequations}
    where~\eqref{eq:feasiblility_bneg} ensures that the upper bounds on $\mu$ are higher than the lower bounds.
    Then, \eqref{eq:primal_feas} is satisfied because
    \begin{align*} 
        a_i + b_i \mu^{\star} \nspace{-2}\! \overset{\eqref{eq:opt1}}{=} \nspace{-2}\!
        a_i + b_i \Big(\nspace{-5} -\frac{a_j}{b_j} \Big)
        \geq
        0,
    \end{align*}
    where the last inequality holds for ${b_i=0}$ due to~\eqref{eq:feasiblility_bzero}, for ${b_i>0}$ because of~\eqref{eq:a_b_max}, and also for ${b_i<0}$ due to~\eqref{eq:feasiblility_bneg}.
    Thus, the expressions in \eqref{eq:opt_conds} satisfy all the conditions for primal and dual optimality.
\end{proof}

The solution~\eqref{eq:qp_solu_max} of the QP~\eqref{eq:blend-qp-discrete} and the corresponding controller can be written equivalently as
\begin{equation} \label{eq:qp_solu}
\begin{aligned}
    \boldsymbol{k}^{\star}_{\rm m}(\x) & = \udes(\x) + \mu^{\star}(\x) \big( \ub(\x) - \udes(\x) \big), \\  
    \mu^{\star}(\x) & = \max_{i \in \mathcal{I}} \lambda \big( a_i(\x),b_i(\x) \big),
\end{aligned}
\end{equation}
with $\lambda$ in~\eqref{eq:lambda_qp}, where we used ${a_{N+2}(\x) = 0}$, ${b_{N+2}(\x) = 1}$.
Notice the similarity to the solution~\eqref{eq:safetyfilter} of the~\eqref{eq:cbf-qp}.
While the~\eqref{eq:cbf-qp} combines the primary controller $\udes$ with the input ${(\nabla h(\x) \cdot g(\x))^\top}$, that represents the safest control direction based on $h$, the proposed QP~\eqref{eq:blend-qp-discrete} blends $\udes$ with the backup controller $\ub$ to ensure the satisfaction of input constraints in addition to safety.

By refining the discretization and taking the continuous limit (${N \to \infty}$), the solution~\eqref{eq:qp_solu} of the discretized QP~\eqref{eq:blend-qp-discrete} leads to the closed-form solution of the~\eqref{eq:blend-qp}:
\begin{equation}
\begin{aligned} \label{eq:cont_mustar}
    \mu^{\star}(\x) = \max \Big\{ & \max_{\tau \in [0,T]} \lambda \big( a(\tau,\x),b(\tau,\x) \big), \\
    &\;\; \lambda \big( a_{\rm b}(\x), b_{\rm b}(\x) \big), 0 \Big\},
\end{aligned}
\end{equation}
where $a$, $b$, $a_{\rm b}$, $b_{\rm b}$ are defined in~\eqref{eq:a_b_coefficients}, while $\lambda$ is given by~\eqref{eq:lambda_qp}.

\subsection{Sensitivity Computations for Optimal Interpolation}

Next, we establish that the~\eqref{eq:blend-qp} requires fewer ODEs to be integrated than the~\eqref{eq:bcbf-qp}.
In the form~\eqref{eq:hdot_blended}, the derivatives can be computed \textit{exactly}:
\begin{subequations}
\begin{align} 
    \dot{h} \big( \phinom, \udes(\x) \big) &= \nabla h(\phinom) \cdot \boldsymbol{q}_{\rm p}(\tau,\x), \\
    \dot{h}_{\rm b} \big( \phinomT,\udes(\x) \big) &= \nabla h_{\rm b}(\phinomT) \cdot \boldsymbol{q}_{\rm p}(T,\x), \\
    \dot{h} \big( \phinom, \ub(\x) \big) &= \nabla h(\phinom) \cdot \boldsymbol{q}_{\rm b}(\tau,\x), \\
    \dot{h}_{\rm b} \big( \phinomT,\ub(\x) \big) &= \nabla h_{\rm b}(\phinomT) \cdot \boldsymbol{q}_{\rm b}(T,\x),
\end{align} 
\end{subequations}
where
${\boldsymbol{q}_{\rm p}(\tau,\x) \triangleq {\stmnom \big( f(\x) + g(\x) \udes(\x) \big)}}$
and
${\boldsymbol{q}_{\rm b}(\tau,\x) \triangleq {\stmnom \big( f(\x) + g(\x) \ub(\x) \big)}}$
are the push forward of the dynamics, representing the change of the backup flow under $\udes$ and $\ub$, respectively.

Although these terms involve the sensitivity matrix 
${\stmnom}$,
the key intuition is that we do not need to compute the entire sensitivity matrix by solving \eqref{eq: stm_nominal}.
Instead, we only need to obtain directional components of the sensitivity matrix,
by solving the following initial value problems:
\begin{equation} \label{eq:pushforward}
\begin{aligned}
    \frac{\partial }{\partial \tau} \boldsymbol{q}_{\rm p}(\tau,\x) & = \jac(\phinom) \boldsymbol{q}_{\rm p}(\tau,\x),
    \nspace{6} \boldsymbol{q}_{\rm p}(0,\x) = f_{\rm p}(\x), \\
    \frac{\partial }{\partial \tau} \boldsymbol{q}_{\rm b}(\tau,\x) & = \jac(\phinom) \boldsymbol{q}_{\rm b}(\tau,\x),
    \nspace{6} \boldsymbol{q}_{\rm b}(0,\x) = f_{\rm cl}(\x),
\end{aligned}
\end{equation}
where ${f_{\rm p}(\x) \triangleq f(\x) + g(\x) \udes(\x)}$, while $f_{\rm cl}$ is given in~\eqref{eq: f_cl},  and $\jac(\x)=\partial f_{\rm cl}(\x)/\partial \x$; cf.~\eqref{eq: stm_nominal}.

Thus, this method requires integrating forward 3$n$ ODEs, given by~\eqref{eq: nomFlow} and~\eqref{eq:pushforward}, rather than the ${n + n^2}$ ODEs required by the standard bCBF approach, reducing the number of ODEs if ${n \geq 2}$.
This may save a significant amount of computation for high-dimensional systems.

The downside to this approach (and the approach in \Cref{sec:OG_blending}) is that the components of the backup and primary controllers are mixed with equal weighting. To improve performance, one could construct an optimization problem similar to that posed in \eqref{eq:blend-qp}, but for a multi-dimensional blending variable, ${\boldsymbol{\mu} = \begin{bmatrix} \mu_1~ \mu_2~ \ldots~ \mu_m\end{bmatrix}^\top}$ with the corresponding mixed controller ${\boldsymbol{k}_{\rm m}(\x) = \udes(\x) + \Delta_{\boldsymbol{u}}(\x)\boldsymbol{\mu}(\x)}$ where ${\Delta_{\boldsymbol{u}}(\x) \triangleq \diag(\ub(\x) - \udes(\x))}$. Unlike the one-dimensional case, this setup has no simple closed-form solution, but requires standard iterative QP solving methods to solve for the optimal $\boldsymbol{\mu}$ vector (see e.g., active set methods \cite[Ch. 16]{Wright_optimization}).
Thus, we opt for the interpolation approach, as it can be solved in closed-form, providing significant computational savings.
\section{Simulation Results}
\label{sec:sim}

To demonstrate the effectiveness of the proposed approach, we consider two simulation examples: the running double integrator example, and a nonlinear fixed-wing aircraft model\footnote{Code is available at: \href{https://github.com/davidvwijk/OI-CBF}{https://github.com/davidvwijk/OI-CBF}}.

\subsection{Double Integrator}

We consider the same setup as in \Cref{ex:db_int_osc}, and compare the optimally interpolated controller \eqref{eq:optimal_blended_controller} with the~\eqref{eq:bcbf-qp} from \Cref{sec:bCBF} and the blended controller~\eqref{eq:safe_ble1} from \Cref{sec:OG_blending}. \Cref{fig:db_int_oscillations} demonstrates that our approach eliminates the undesired oscillations at the boundary of $\Cs$. While the blended approach subjectively selects $\mu$, our approach chooses the \textit{optimal} value, which yields a safe controller with better performance. We note that for this very special case, the \eqref{eq:bcbf-qp} and the \eqref{eq:blend-qp} are equivalent. Because there is a single input, and the primary and backup controllers are state-independent signals on either extreme of the domain $\mathcal{U}$, the \eqref{eq:blend-qp} can search over the entire domain of inputs to find a feasible, safe controller.

\subsection{Fixed-Wing Aircraft}

Next, consider a nonlinear fixed-wing aircraft model \cite{stephens2021realtime}:
\begin{align} 
    \!\underbrace{\begin{bmatrix}
      \dot{\phi}\\
      \dot{\theta}\\
      \dot{\psi}\\
      \dot{p}_{\rm N} \\
      \dot{p}_{\rm E} \\
      \dot{H} \\
      \dot{P} \\
      \dot{N}_{\rm z}
    \end{bmatrix}}_{\dot{\x}}
    \!=\! 
    \underbrace{\begin{bmatrix}
      P + \frac{N_{\rm z} g_{\rm D}}{V_{\rm T}} \sin \phi \tan \theta,\\
      \frac{g_{\rm D}}{V_{\rm T}}\left(N_{\rm z} \cos \phi - \cos \theta \right)\\
      \frac{N_{\rm z} g_{\rm D} \sin \phi}{V_{\rm T} \cos \theta}\\
      V_{\rm T} \cos \theta \cos \psi \\
      V_{\rm T} \cos \theta \sin \psi\\
      V_{\rm T} \sin \theta \\
      -\frac{1}{\tau_{\rm p}}P \\
      -\frac{1}{\tau_{\rm z}}N_{\rm z}
    \end{bmatrix}}_{f(\boldsymbol{x})}
    \!+\!
    \underbrace{\begin{bmatrix}
        0 & 0 \\ 
        0 & 0 \\
        0 & 0 \\
        0 & 0 \\
        0 & 0 \\
        0 & 0 \\
        \frac{1}{\tau_{\rm p}} & 0 \\
        0 & \frac{1}{\tau_{\rm z}} \\
    \end{bmatrix}}_{g(\x)}
    \underbrace{\begin{bmatrix}
        u_{\rm 1} \\
        u_{\rm 2}
    \end{bmatrix}}_{\boldsymbol{u}},
    \label{eq:aircraft}
\end{align}
which is derived by assuming that the angle of attack and the angle of side slip are zero, the load factor dynamics and roll mode are modeled as a first-order system, and that gravity and load factor are the only accelerations on the aircraft.
In \eqref{eq:aircraft}, the states $\phi$, $\theta$, $\psi$, $p_{\rm N}$, $p_{\rm E}$, $H$, and $P$, and $N_{\rm z}$ represent the roll, pitch, and yaw angles, the north and east position coordinates, altitude, roll rate, and normal load factor, respectively. The constants $V_{\rm T}$, $g_{\rm D}$, $\tau_{\rm p}$ and $\tau_{\rm z}$ are the true airspeed, gravitational acceleration, and the time constants for the roll mode and load factor dynamics, respectively. The bounded inputs are the commanded roll rate ${{u_{1} \in [-\frac{\pi}{2},\frac{\pi}{2}]} \nspace{3}{\rm rad/s}}$ and commanded load factor ${u_{2} \in [-1, 4] \nspace{3}}$ (representing normal accelerations measured in $g_{\rm D}$ that pull the aircraft up).

\begin{figure}[t]
\centering
\begin{minipage}{0.35\textwidth}
    \centering
    \includegraphics[width=\linewidth]{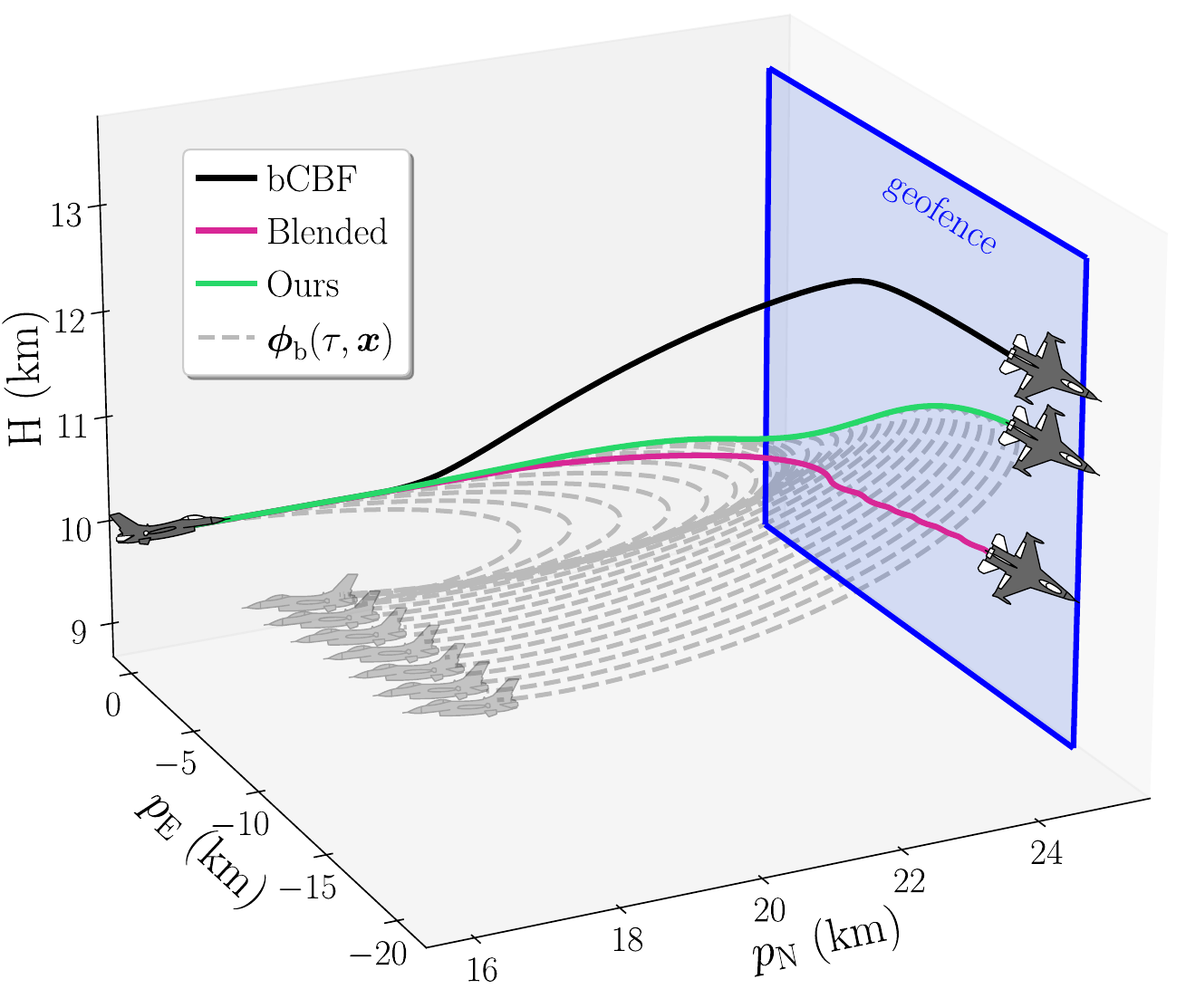}
    \label{subfig:3d}
\end{minipage}%
\hspace{.01cm}
\begin{minipage}{0.12\textwidth}
    \centering
    \includegraphics[width=\linewidth]{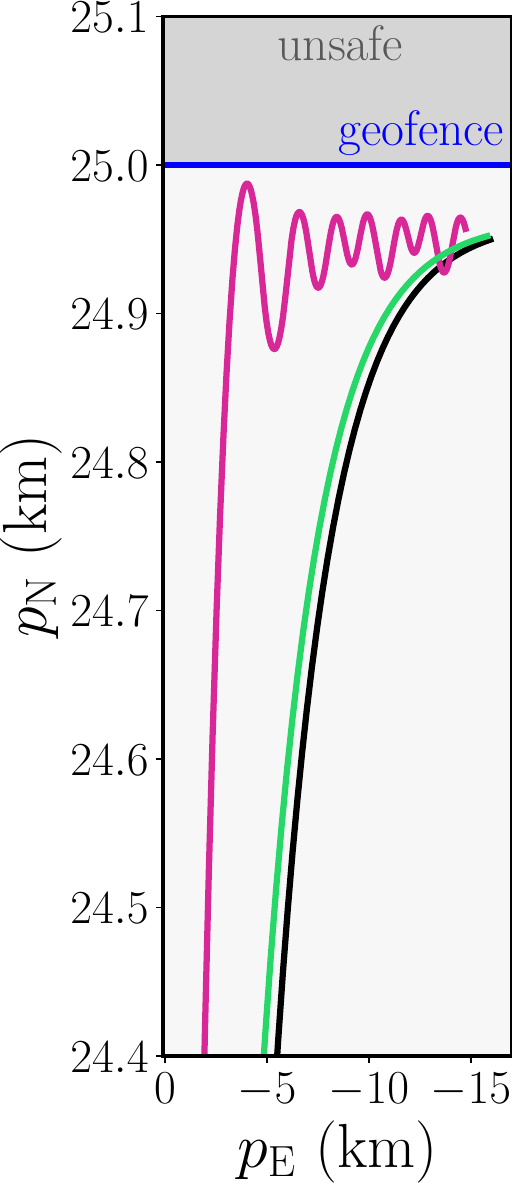}
    \label{subfig:projection}
\end{minipage}%
\hspace{.01cm}
\begin{minipage}{0.48\textwidth}
    \centering
    \vspace{-.15cm}
    \includegraphics[width=\linewidth]{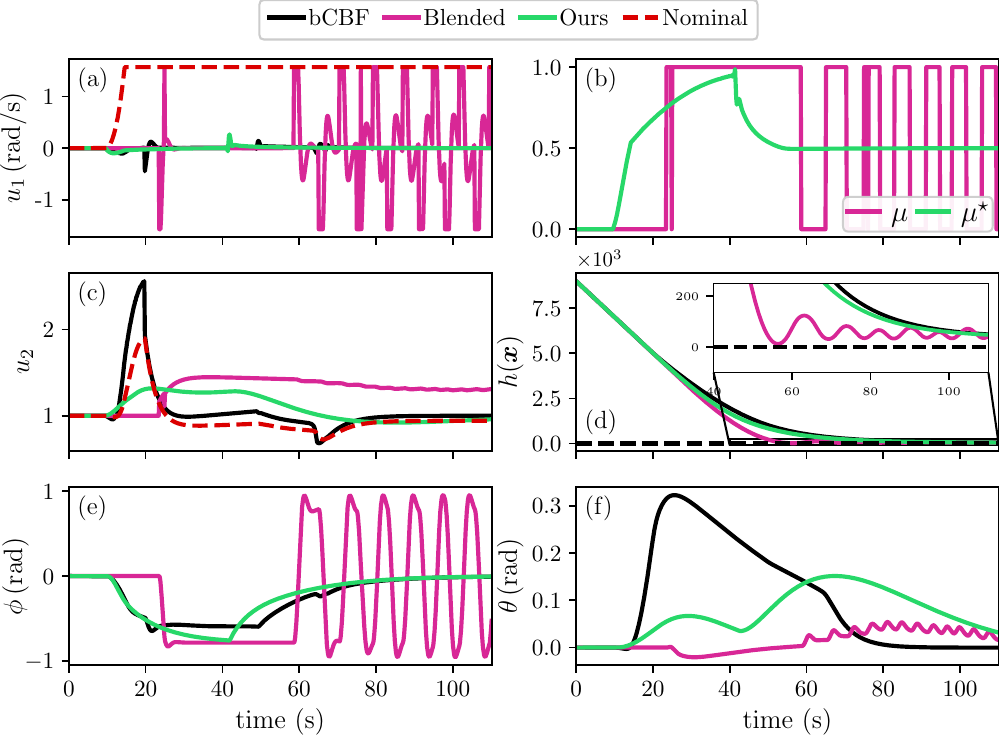}
    \label{subfig:quad}
\end{minipage}
\vspace{-.2cm}
\caption{Simulation of the fixed-wing aircraft~\eqref{eq:aircraft} comparing the \eqref{eq:bcbf-qp} (black), the function-based blended controller in \eqref{eq:safe_ble1} (pink) and the proposed~\eqref{eq:blend-qp} approach (green). The \eqref{eq:bcbf-qp} and \eqref{eq:blend-qp} ensure the safety of \eqref{eq:aircraft} by avoiding crossing the geofence (blue) and gliding near its surface (\textbf{top left}). The blended controller also ensures safety but experiences oscillations along the geofence (\textbf{top right}), causing hysteresis most notably in the roll rate command (\textbf{a}) as the blending parameter switches between $0$ and $1$ (\textbf{b}). This in turn produces stark oscillations in the aircraft's roll angle, which is highly undesirable for stable flight (\textbf{e}). In contrast, the \eqref{eq:blend-qp} smoothly interpolates between $\ub$ and $\boldsymbol{k}_{\rm p}$ (\textbf{b}) to guarantee the safety of the aircraft (\textbf{d}), whilst avoiding evaluating a QP online as required by the \eqref{eq:bcbf-qp}. Further, the closed-form solution from \Cref{thm:explicit_solution} always yields a safe control signal which respects the input bounds, as per \Cref{lemma:mu_qp_feasible}.}
\label{fig:aircraft_sim}
\end{figure}

We consider a scenario where the aircraft must stay within a prespecified airspace, described by a geofence with a known location ${\boldsymbol{p}_{\rm g} \in \mathbb{R}^2}$ and normal vector ${{\boldsymbol{n}_{\rm g}} \in \R^2}$ with ${\norm{\boldsymbol{n}_{\rm g}} = 1}$. The safe set can be written as 
\begin{align} \label{eq:aircraft_Cs}
    \Cs \triangleq \{\boldsymbol{x} \in \mathcal{X} : h(\x) = \boldsymbol{n}_{\rm g} \cdot (\boldsymbol{p} - \boldsymbol{p}_{\rm g}) \geq 0 \},
\end{align}
where ${\boldsymbol{p} = [p_{\rm N}~ \nspace{2} p_{\rm E}]^{\top}}$. The nominal controller seeks to push the aircraft into the geofence while the backup controller executes a coordinated turn with ${\ub(\x) = [u_{\rm b1}~ u_{\rm b2}]^\top}$ where
\begin{align} \label{eq:aircraft_kb}
    u_{\rm b1} &= {\rm sat} \big( P + \tau_{\rm p}\big(K_{\phi}(\phi^* - \phi) - K_{\rm p}P\big) \big),\\
    u_{\rm b2} &= {\rm sat} \big( N_{\rm z} + \tau_{\rm z} \big(K_{\rm N}(N^*_{\rm z} \!-\! N_{\rm z}) + K_{\rm H}(H^* \!-\! H) -K_{\theta}\theta\big) \big), \nonumber
\end{align}
and ${K_{\phi},K_{\rm p},K_{\rm N},K_{\rm H},K_{\theta} > 0}$ are gains, $H^*$ is the altitude, ${\phi^* \in [-\frac{\pi}{4},\frac{\pi}{4}]}$ is the bank angle, and ${N^*_{\rm z} = 1/\cos(\phi^*)}$ is the load factor in the turn. Note that $u_{\rm b1}$ and $u_{\rm b2}$ are saturated using a softplus function which retains the smoothness of $f_{\rm cl}$. Such a controller drives the aircraft to a circular path and renders the following set forward invariant:
\begin{align} \label{eq:aircraft_Cb}
    \Cb \triangleq \left\{ \x \in \mathcal{X} \,\middle|\, 
    \begin{array}{c}
    \phi =\phi^*,\ \theta = 0,\ H =H^*,  \\ P=0,\ N_{\rm z}= N_{\rm z}^*,\ h_6(\x) \geq 0    \\
    \end{array}
    \right\},
\end{align}
with
\begin{equation*}
    h_6(\x) \triangleq h(\x) + \rho\big(\boldsymbol{n}_{\rm g} \cdot \boldsymbol{n}(\psi) - 1\big) - c_6, \nspace{6}
    \boldsymbol{n}(\psi) \triangleq
    \begin{bmatrix}
        -\sin\psi \\ \cos\psi
    \end{bmatrix}.
\end{equation*}
Here,
${\rho = {V_{\rm T}^2}/({g_{\rm D} \tan(\phi^*))}}$ is the turning radius under $\ub$ and ${c_6 > 0}$. This backup set describes the set of periodic orbits whose centers are at least ${\rho + c_6}$ away from the geofence, at a height of $H^*$.
\begin{proposition}
\textit{
The backup set $\Cb$ in~\eqref{eq:aircraft_Cb} is forward invariant for the system~\eqref{eq:aircraft} with the backup controller~\eqref{eq:aircraft_kb}, and it satisfies ${\Cb \subseteq \Cs}$ for the safe set in~\eqref{eq:aircraft_Cs}.
}
\begin{proof}
For all ${\x \in \Cb}$, the states ${\phi, \theta, H, P, \text{ and } N_z}$ are constant along the closed-loop backup dynamics under the backup control law \eqref{eq:aircraft_kb}. The backup flow of the yaw angle evolves with ${\frac{\partial}{\partial \tau}\psi_{\rm b}(\tau,\x) = V_{\rm T}/\rho}$ which has the solution ${\psi_{\rm b}(\tau,\x) = (V_{\rm T}/\rho)\tau + \psi}$. Similarly, the closed-form solution for the backup flow of the position of the aircraft is
\begin{align} \label{eq:position_backup}
    \boldsymbol{p}_{\rm b}(\tau,\x) = \boldsymbol{p} 
    + \rho
    \big( \boldsymbol{n}(\psi) - \boldsymbol{n}(\psi_{\rm b}(\tau,\x)) \big).
\end{align}
Thus, using \eqref{eq:position_backup}, it can be verified that ${h_6(\boldsymbol{p}_{\rm b}(\tau,\x))\geq0}$ along the backup flow for all ${\x \in \Cb}$ and ${\tau \geq 0}$:
\begin{align*}
    h_6 \big( \boldsymbol{p}_{\rm b}(\tau,\x) \big)\!
    &= 
    \boldsymbol{n}_{\rm g} \cdot \big( \boldsymbol{p}_{\rm b}(\tau,\x) - \boldsymbol{p}_{\rm g} \big) \\
    & \quad + \rho \big( \boldsymbol{n}_{\rm g} \cdot \boldsymbol{n}(\psi_{\rm b}(\tau,\x)) - 1 \big) - c_6 \\
    &= \boldsymbol{n}_{\rm g} \cdot (\boldsymbol{p} - \boldsymbol{p}_{\rm g}) + \rho \big( \boldsymbol{n}_{\rm g}  \cdot \boldsymbol{n}(\psi) - 1 \big) - c_6 \\
    &= h_6(\x) \geq 0.
\end{align*}
Further,
because
${\rho\big(\boldsymbol{n}_{\rm g} \cdot \boldsymbol{n}(\psi) - 1\big) \leq 0}$
and ${-c_6 \leq 0}$,
we have ${h(\x) \geq h_6(\x)}$ for all ${\x \in \mathcal{X}}$, and thus ${\Cb \subseteq \Cs}$.
\end{proof}
\end{proposition}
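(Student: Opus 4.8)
The plan is to split the claim into its two assertions --- forward invariance of $\Cb$ under $\ub$, and the inclusion $\Cb \subseteq \Cs$ --- and handle them in turn. For forward invariance I would first argue that the five coordinates $\phi, \theta, H, P, N_{\rm z}$ pinned to fixed values in the definition of $\Cb$ stay frozen along the closed-loop backup flow. Substituting~\eqref{eq:aircraft_kb} into~\eqref{eq:aircraft} and evaluating on the surface $\{\phi = \phi^*,\, \theta = 0,\, H = H^*,\, P = 0,\, N_{\rm z} = N^*_{\rm z}\}$, the proportional feedback terms vanish, so the commanded inputs reduce to $u_{\rm b1} = 0$ and $u_{\rm b2} = N^*_{\rm z}$ (the saturation being transparent in its unsaturated interior); consequently $\dot P$ and $\dot N_{\rm z}$ vanish. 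Likewise $\dot\phi = 0$ and $\dot H = 0$ follow immediately from $P = 0$ and $\theta = 0$, while $\dot\theta = 0$ is precisely the defining identity $N^*_{\rm z}\cos\phi^* = 1$. This freezes all five coordinates and leaves only the $(\psi, \boldsymbol{p})$ subsystem to analyze.

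Next I would integrate that subsystem in closed form. With the five states held fixed, the yaw satisfies $\dot\psi = g_{\rm D}\tan\phi^*/V_{\rm T} = V_{\rm T}/\rho$, giving $\psi_{\rm b}(\tau,\x) = (V_{\rm T}/\rho)\tau + \psi$, and the planar position satisfies $\dot{\boldsymbol{p}} = V_{\rm T}[\cos\psi,\, \sin\psi]^\top$, whose integration against this linearly growing yaw yields the circular arc~\eqref{eq:position_backup}. The conceptual heart of the argument is that the turn center $\boldsymbol{p} + \rho\,\boldsymbol{n}(\psi)$ is independent of $\tau$: substituting~\eqref{eq:position_backup} into $h_6$, the term $\rho\,\boldsymbol{n}_{\rm g}\cdot\boldsymbol{n}(\psi_{\rm b}(\tau,\x))$ produced by the moving position exactly cancels the one carried by the explicit $\psi$-dependence of $h_6$, leaving $h_6(\boldsymbol{p}_{\rm b}(\tau,\x)) = h_6(\x)$. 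Thus $h_6$ is conserved along $\ub$, so $h_6 \geq 0$ is maintained, which together with the frozen coordinates establishes forward invariance of $\Cb$.

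For the inclusion I would rearrange the definition of $h_6$ as $h(\x) = h_6(\x) - \rho\big(\boldsymbol{n}_{\rm g}\cdot\boldsymbol{n}(\psi) - 1\big) + c_6$ and bound the two correction terms: since $\boldsymbol{n}_{\rm g}$ and $\boldsymbol{n}(\psi)$ are unit vectors, Cauchy--Schwarz gives $\boldsymbol{n}_{\rm g}\cdot\boldsymbol{n}(\psi) \leq 1$, so $-\rho(\boldsymbol{n}_{\rm g}\cdot\boldsymbol{n}(\psi) - 1) \geq 0$, and $c_6 > 0$; hence $h(\x) \geq h_6(\x) \geq 0$ on $\Cb$, i.e. $\Cb \subseteq \Cs$. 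I expect the only genuine obstacle to be the middle step: correctly integrating the planar kinematics to obtain~\eqref{eq:position_backup} and recognizing that $h_6$ was constructed exactly to measure the signed distance of the time-invariant turn center from the geofence, offset by $\rho + c_6$. Once that arc formula is in hand, the conservation of $h_6$ and the containment inequality are routine telescoping cancellations.
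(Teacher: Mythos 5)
Your proposal is correct and follows essentially the same route as the paper's proof: freeze the five regulated states on $\Cb$, integrate the $(\psi,\boldsymbol{p})$ subsystem in closed form to get the circular arc~\eqref{eq:position_backup}, observe that $h_6$ is conserved because the turn center $\boldsymbol{p}+\rho\,\boldsymbol{n}(\psi)$ is time-invariant, and conclude $\Cb \subseteq \Cs$ from $\boldsymbol{n}_{\rm g}\cdot\boldsymbol{n}(\psi) \leq 1$ and $c_6 > 0$. You in fact supply more detail than the paper on the first step (verifying that the feedback terms vanish and that $\dot\theta = 0$ reduces to $N_{\rm z}^*\cos\phi^* = 1$), which the paper simply asserts.
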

In practice, a neighborhood of $\Cb$ may be used as a backup set. Such a set can be constructed via
${h_1(\x) = c^2_1 \!-\! (\phi^* \!-\! \phi)^2}$,
${h_2(\x) = c^2_2 \!-\! \theta^2}$,
${h_3(\x) = c^2_3 \!-\! (H^* \!-\! H)^2}$,
${h_4(\x) = c^2_4 \!-\! P^2}$,
${h_5(\x) = c^2_5 \!-\! (N^*_{\rm z} \!-\! N_{\rm z})^2}$
with constants
${c_1, c_2, c_3, c_4, c_5 \!>\! 0}$.
Namely, one may use a smooth under-approximation of the minimum of these functions as in \cite{lindemann_stlcbf_2019,tamas_composing23}: 
\begin{align*}
\widehat{\mathcal{C}}_{\rm B} \triangleq \bigg \{ \x \in \mathcal{X} : h_{\rm b}(\x) = {-\frac{1}{\kappa} \ln \Big(\nspace{-2} \sum_{i \in I} {\rm e}^{-\kappa h_i(\x)}\Big)} \geq 0 \bigg \},
\end{align*}
with smoothing parameter ${\kappa > 0}$ and ${I = \{1,\dots,\nspace{-2}6\}}$. 

The nominal controller uses \eqref{eq:aircraft_kb} but with a desired roll angle
${\phi^* = {\rm sat}( K_{\psi} (\psi^* - \psi) V_{\rm T}/{g_{\rm D}})}$
and yaw angle
${\psi^* = \arctan((p^*_{\rm E} - p_{\rm E})/(p^*_{\rm N} - p_{\rm N})) \in \mathbb{S}^1}$,
where ${K_{\psi} > 0}$ and $p^*_{\rm E}$, $p^*_{\rm N}$ is a desired setpoint which is purposefully placed far behind the geofence.

\Cref{fig:aircraft_sim} shows the simulation results for system~\eqref{eq:aircraft} comparing the methods from \Cref{sec:bCBF,sec:OG_blending} with the proposed~\eqref{eq:blend-qp} controller\footnote{The simulation uses
${{g_{\rm D}} \!=\! 9.81 \nspace{2}{\rm m/s^2}}$,
${{V_{\rm T} \!=\! 200 \nspace{2}{\rm m/s}}}$,
${{\tau_{\rm p} \!=\! 1 \nspace{2}{\rm s}}}$,
${{\tau_{\rm z} \!=\! 1 \nspace{2}{\rm s}}}$,
${{\phi^* \!=\! - \frac{\pi}{4} \nspace{2}{\rm rad}}}$,
${H^* \!=\! 10 \nspace{2}{\rm km}}$.
}. The \eqref{eq:blend-qp} guarantees the safety of the aircraft by smoothly selecting the parameter $\mu^{\star}(\x)$ which minimizes the utilization of $\ub$, whilst ensuring that \eqref{eq:aircraft} stays within the controlled invariant set $\Cbi$ defined in \eqref{def:C_BI}. Thanks to \Cref{thm:safety_oi} and \Cref{thm:explicit_solution}, this controller satisfies the safety constraint, and the optimal blending parameter $\mu^{\star}(\x)$ is obtained in closed-form. Further, this parameter ensures that the safe controller always satisfies input constraints, as can be seen in \Cref{fig:aircraft_sim}.
While the \eqref{eq:bcbf-qp} ensures the safety of \eqref{eq:aircraft} under input bounds, it requires solving a QP with a very large number of constraints online, and forward integrating a large number of ODEs (i.e., ${n^2+n=72}$ ODEs, while the optimal interpolation uses ${3n = 24}$ ODEs). The blended approach \eqref{eq:safe_ble1}-\eqref{eq:tun_par} also obeys the geofence constraint and the input bounds, but experiences hysteresis and subsequent oscillations near the geofence.
\section{Conclusion}
\label{sec:conc}
We presented a novel framework to guarantee the safety of nonlinear control-affine systems by optimally interpolating between a verified backup controller and a nominal controller. 
We derived a closed-form solution for the resulting safe controller, and proved that this controller respects input bounds while guaranteeing safety. 
Through simulations, including a double integrator and a geofencing scenario on a fixed-wing aircraft, we demonstrated the overall performance improvement of our approach over the existing function-based controller blending method. This advancement simplifies the deployment of bCBF-based safety filters, providing feasibility guarantees in computationally limited systems. 
Future work will explore OI-QP implementations on computationally-constrained hardware platforms such as a fixed-wing aircraft.

\bibliographystyle{ieeetr}
\bibliography{refs}

\end{document}